\title{How to Sell High-Dimensional Data Optimally}
\author{Andrew A. Li}
\author{R. Ravi}
\author{Karan Singh}
\author{Zihong Yi}
\author{Weizhong Zhang}
\affil{Tepper School of Business, Carnegie Mellon University, Pittsburgh}
\affil[ ]{\texttt{\{aali1,ravi,karansingh\}@cmu.edu,\{zihongy,weizhong\}@andrew.cmu.edu}}
\date{}
\begin{document}
\maketitle
\begin{abstract}
Motivated by the problem of selling large, proprietary data, we consider an information pricing problem proposed by~\cite{bergemann2018design} that involves a decision-making buyer and a monopolistic seller. The seller has access to the underlying state of the world that determines the utility of the various actions the buyer may take. Since the buyer can make better decisions given more accurate assessments of the state, the seller can promise the buyer supplemental information at a price. As the seller may not be perfectly informed about the buyer's private preferences, we frame the problem of designing data products as one where the seller designs a revenue-maximizing menu of statistical experiments. 

Previous work by~\cite{cai2021sell} showed that an optimal menu can be found in time polynomial in the size of the state space, whereas we observe that the latter quantity is naturally exponential in the dimension of the data. We propose an algorithm which, given only sampling access to the state space, provably generates a near-optimal menu with a number of samples independent of the size of the state space. We then analyze a special case of high-dimensional Gaussian data, showing that (a) it suffices to consider scalar Gaussian experiments, (b) the optimal menu of such experiments can be found efficiently via a semidefinite program, (c) full surplus extraction occurs if and only if a natural separation condition holds on the set of potential preferences of the buyer, and (d) deterministic experiments suffice for high-dimensional data. We empirically validate our algorithm on a routing example using real traffic data, finding that 95\% of the optimal revenue is attained with just 40 samples from the state space.
\end{abstract}

\section{Introduction} \label{sec:intro}
Consider the problem of selling proprietary data so as to maximize revenue. The prevalence and importance of this problem in practice are by this point firmly-established. In fact, the Electronic Frontier Foundation identified 1,400 separate data {\em brokers} operating in the United States as of 2025,\footnote{\url{https://www.eff.org/document/appendix-b-databrokerfullregistry2025}} to say nothing of the number of individual {\em sellers} on these platforms. Secondary data in the financial industry (so-called ``alternative data'') alone constituted an \$18 billion market in 2025, and is expected to grow 60\% yearly through 2030.\footnote{\url{https://www.grandviewresearch.com/industry-analysis/alternative-data-market}} The market for data for B2C marketing is higher by at least an order of magnitude.\footnote{\url{https://hbr.org/2020/05/buying-consumer-data-tread-carefully}} And most recently, the training of large-scale machine learning models has created another market for data, and the race for ever-increasing size and sophistication in these models all but guarantees that the value of high-quality data will rise in step.

Any principled approach to this problem must deal with a set of challenges that include, at the very least, (a) designing a set of potential data ``products,'' (b) identifying the value that potential buyers place on these products, and (c) selecting the optimal subset of products to offer, together with their corresponding optimal prices. Much of the prior work on selling data addresses strict subsets of these challenges, often in the context of stylized or parametric models. 

A line of work that began with \cite{bergemann2018design} allows all of these challenges to be modeled cleanly and simultaneously. That model treats ``information'' (of which data is a special case) as valuable to a buyer insofar as it improves their eventual selection of a utility-maximizing action. There is assumed to be an abstract true {\bf state} $\omega \in \Omega$, which is unknown to buyers and governs the resulting utility of each abstract {\bf action} $a \in \mathcal{A}$. Buyers of different {\bf types} have their own utility functions and their own prior belief distributions on the true state. The space of potential ``products'' to sell to them consists of statistical {\bf experiments}, which is entirely generic in representing the manner in which buyers can update their beliefs on the true state. 

It is worth introducing a few applications at this point to make the discussion more concrete:
\begin{itemize}
\item {\em Data-driven decision-making:} Buyers who face a concrete decision-making task my be interested in reducing uncertainty with data. As one example, which we will make concrete in our experiments: live traffic data can be sold to buyers who intend to solve (implicitly or explicitly) different routing problems (e.g.~shortest path, traveling salesman, vehicle routing). In this case, the actions are the possible routes, the true state is the set of (unknown) travel times on every edge of the network (or every road, say), and greater utility corresponds to a shorter total travel time. Two other standard application areas here are financial investors optimizing a portfolio when future returns are uncertain, and marketers deciding which potential customers to target when receptiveness to such efforts is uncertain.

\item {\em Training ML models:} Buyers may be using data to train machine learning models. In this case, the true state is the data itself (encoded numerically), the set of actions is a set of machine learning models (encoded parametrically) from which the buyer will eventually select, and  utility corresponds to some measure of accuracy.

\item {\em Features for ML predictions:} Another use of data lies in machine learning inference. Buyers may use the data as feature vectors while using, for example, a linear regression model to estimate an uncertain quantity of interest. Here, the action is scalar, and the true state is the data itself, and greater utility results from ever more accurate estimates of the uncertain target. We will analyze a special version of this application in detail.
\end{itemize}
In addition to illustrating the wide-ranging applications of the model we will study, these examples underscore a critical limitation of work to date: {\em scalability with respect to the state space $\Omega$}, which is often high-dimensional as in the applications above. 
In particular, this model is only useful if the resulting {\em algorithmic} problem of selecting and pricing data products is solvable but, unfortunately, previous work in this regard is insufficient. The original work of \cite{bergemann2018design} solved smaller special cases, including $|\Omega|=2$, and general $|\Omega|$ but limited to two types of buyers. \cite{cai2021sell} then proposed a linear programming algorithm, although its size scales polynomially with $|\Omega|$. We contend that $|\Omega|$ is the limiting factor in practice, as it grows exponentially in the dimension of the data.\footnote{Incidentally, \cite{cai2021sell} discuss the same shortest-path application as here, but in their case they describe $\Omega$ as a two-element set that encodes ``low'' and ``high'' traffic.}

\vspace{1em}
\subsection{Main Contributions}

Thus motivated, we study the model of \cite{bergemann2018design} with the aim of developing an algorithm for large state spaces. Our contributions are as follows: 

\vspace{1em}
\noindent
{\bf 1. An Efficient Algorithm for Data Pricing:} The primary contribution of this paper is an algorithm for data pricing that obtains near-optimal revenue with a runtime that is {\em independent} of the potentially infinite state space:

\vspace{1em}
\begin{quote}
\noindent {\bf Theorem~\ref{thm:main} (Informal).} {\em Given samples of the state space drawn according to the buyers' belief distributions, there is an algorithm that computes a near-optimal menu of experiments and prices with high probability, and both its runtime and sample complexity are independent of the size of the state space and polynomial in the number of actions.}
\end{quote}
\vspace{1em}
Table \ref{tab:comparison} highlights the key contrasts between this result and previous algorithmic work. For example, in addition to scalability with respect to the state space, our algorithm's polynomial dependence on the action space compares well to the choice in existing work between (a) polynomial dependence and (b) independence at the cost of super-exponential dependence on the state space.

\begin{table} 
\centering
\begin{tabular}{@{} cccc @{}}
\toprule
{\bf Paper} & {\bf Buyer Types} & {\bf State Space Size} & {\bf Runtime} \\ \midrule 
\cite{bergemann2018design} & $n$ & 2 & $O(n)$ for $|\mathcal{A}| = 2$\\
\cite{bergemann2018design} & 2 & Finite & $O(|\Omega||\mathcal{A}|)$ \\
\cite{cai2021sell} & $n$ & Finite & $\mathrm{poly}(n,|\Omega|,|\mathcal{A}|)$\\
\cite{cai2021sell} & $n$ & Finite & $\mathrm{poly}(n,|\Omega|^{|\Omega|^2},1/\epsilon^{|\Omega|^2})$*\\
This Paper & $n$ & Infinite & $\mathrm{poly}(n, |\mathcal{A}|, 1/\epsilon)$**\\
\bottomrule
\end{tabular}
\caption{Comparison to prior work. Reported runtimes that include $\epsilon$ are for algorithms which approximate the optimal value up to an additive factor. 
*Requires an oracle that produces the expected-utility-maximizing action for any given distribution over states. **Requires a sampling oracle over the state space.}
\label{tab:comparison}
\end{table}




\vspace{1em}
\noindent
{\bf 2. High-Dimensional Gaussian Data:} Our second contribution is a sharpening of our results for an important special case where the state space is $\R^d$ and the state $\omega$ follows a Gaussian distribution. Here, each buyer type has a private preference vector \( \theta_i\in \mathbb{R}^d \), which differentially weighs the various features (coordinates) of \( \omega \) along which they wish to estimate the unknown state. The utility of type $i$ for action $a$ is $u^i(\omega, a) = - (\theta_i^\top \omega - a)^2$. For this setting, we have the following results.
\begin{enumerate}[itemsep=0em]
    \item We show (in \Cref{thm:gauexp}) that it is sufficient, without any loss of revenue, to consider scalar Gaussian experiments of the form $\mathcal{N}(v^\top \omega, \sigma^2)$, which project the state $\omega$ along a direction before adding noise, although this direction may not align with any of the preference vectors.
    \item  We provide (via \Cref{thm:surplus}) an intuitive condition characterizing when the seller can extract the full surplus from the buyer. This happens if and only if, for each type $i$, its preference vector $\theta_i$ is longer than the shadow that other $\theta_j$'s cast on it. We also show (in \Cref{thm:deter}) for high-dimensional states that there exists a revenue-optimal menu composed entirely of deterministic experiments ($\sigma_E=0$).
    \item We formulate a semidefinite program (in \Cref{thm:sdp}), solvable in time scaling polynomially in the number of buyers and the {\em dimension} of the underlying state space, that computes such a revenue-maximizing menu composed entirely of scalar Gaussian experiments. 
\end{enumerate}


\noindent This special case reveals important qualitative properties of the broader model we study, including the relative value of offering multiple data products to extract revenue from heterogeneous buyers, and the sufficiency of deterministic data products (i.e.~those which do not introduce ``artificial'' randomness).

\vspace{1em}
\noindent
{\bf 3. Experimental Validation:} Our final contribution is an empirical validation of our algorithm (in the general setting of Contribution 1) on a routing example using real traffic data from the publically-available {\em METR-LA} traffic dataset. Our results show that our algorithm, which relies on sampling from the state space, achieves near-optimal revenue given only a small number of samples. Indeed just 40 samples were needed to attain 95\% of the optimal revenue in our experiment (see \Cref{fig:1}).

\vspace{1em}
We conclude this section with a brief review of related work. We then introduce the model and problem in Section~\ref{sec:prelim}. In Section~\ref{sec:sampleLP}, we present our algorithm and prove our main result. In Section~\ref{sec:gaussian}, we analyze the specialized Gaussian setting. We present our experimental results in \Cref{sec:experiments}, and conclude in Section~\ref{sec:conclusion}.

\vspace{1em}
\subsection{Related Literature}
The scope of our work intersects with the areas of monopoly pricing,  information design, information pricing, and data markets.

{\bf Monopoly Pricing:}
A critical but distinct issue is the monopoly pricing problem associated with the sale of goods, where the buyer's valuation is derived from a known distribution. Complete characterizations are known in the single-dimensional setting~\citep{myerson1981optimal, riley1983optimal}, and for specific cases in multidimensional settings~\citep{fiat2016fedex, devanur2020optimal, haghpanah2015reverse, giannakopoulos2014duality}.  More broadly, classical multidimensional screening and multiproduct pricing models illustrate that once buyer heterogeneity is genuinely multidimensional, optimal tariffs can exhibit nontrivial geometric features such as exclusion regions and bunching, and may require rich menus even in stylized environments~\citep{armstrong1996multiproduct, rochet1998ironing}. Furthermore,~\cite{esHo2007optimal} explored the simultaneous design of mechanisms for selling an \emph{indivisible} good, where the auctioneer can, without prior observation, release additional signals about the item. 
Relative to goods, pricing for data/information differs significantly in two primary ways: First, while goods are typically indivisible, the pricing of information has more complexity due to its intangible nature. Second, buyers with varying preferences do not only assign different values to various products; their preferences also influence the ranking of these products. Consequently, the value of information inherently encompasses both a vertical element (the quality of the information) and a horizontal element (the position of the information).

{\bf Information Design:}
The primary goal in information design is for the designer to disclose signals about the state to influence the actions of the agents involved~\citep{bergemann2019information, kamenica2011bayesian, kamenica2019bayesian, alonso2016bayesian, dughmi2016algorithmic, neumann2019effective, guo2025selling}. A useful perspective emphasized in this literature is to view a disclosure policy as a commitment to an information structure, with the set of implementable outcomes disciplined by Bayesian plausibility; see, e.g., the Bayes correlated equilibrium formulation in~\cite{bergemann2016information}. This viewpoint clarifies how different signal structures shape agents’ best responses and provides a language for comparing disclosures.  A fundamental distinction arises in our model of information selling: unlike traditional information design, where the designer's revenue is typically derived from the actions taken by the buyers, in our model the seller's revenue is accrued directly from the payments made by the buyers.

{\bf Information Pricing:}
Two primary research streams have emerged: {\em ex ante} pricing and {\em ex post} pricing, in parallel to real-life information pricing models, specifically subscription-based (ex ante) and one-time payment systems (ex post).
In ex ante pricing, the seller commits to revealing information about the state $\omega$ without prior observation, while in ex post pricing, the pricing decision is contingent upon the state $\omega$ observed at the time of the transaction. 

{\em Ex-ante Pricing:}
\cite{admati1986monopolistic, admati1990direct} analyzed the sale of information to a continuum of ex ante homogeneous buyers, demonstrating that it is optimal to provide noisy and idiosyncratic information, thereby ensuring that the seller maintains a local monopoly. The context in their work is the rational expectations equilibrium, which requires interaction between buyers. This contrasts with our model where there is no interaction among buyers, and buyers are inherently heterogeneous due to their utilities. 
As mentioned earlier, we use the model introduced in \cite{bergemann2018design},
where the state space is discrete and the seller's information is conveyed through experiments~\citep{blackwell1951comparison, blackwell1953equivalent}. They provide an explicit construction of the optimal menu in the case of binary states and actions. Based on their framework, \cite{cai2021sell} proposed a linear program to calculate the revenue-maximizing menu. 
They also highlighted that when a best-action oracle is available, that is, an oracle that receives a distribution of states as input and returns the optimal action, there exists an algorithm capable of efficiently (in the number of actions) computing the revenue maximizing menu. 

{\em Ex-post Pricing:}
A distinct body of research focuses on ex-post pricing where the value function post-realization depends on two state variables - one held by the data seller and the other by the buyer~\citep{sundararajan2004nonlinear, arora2005pricing, babaioff2012optimal, liu2021optimal, chen2022selling}. In these settings, the seller can condition both the information disclosed and the payment rule on the realized signal, and the mechanism may involve interaction in order to elicit the buyer’s private information or to tailor recommendations. Ex-post pricing highlights substantial differences from ex-ante pricing, as it permits the seller to tailor experiments based on the realized state, thereby creating additional degrees of freedom that are unavailable under ex-ante commitment.

{\bf Data Markets / Information Intermediation:}
A complementary literature studies the organization of markets in which data (and information) is traded as a product, often through intermediaries, to improve downstream targeting and decision-making. In advertising environments, for instance, ~\cite{bergemann2015selling} model the sale of consumer ``cookies,'' highlighting how a data provider’s information product shapes buyers’ targeting decisions and, in turn, demand for data. More broadly, when data are nonrival, the resale and reuse of the same information can generate externalities that are central to intermediated data markets~\citep{bergemann2022economics}. A related perspective formalizes data products as market segmentations~\citep{yang2022selling}, underscoring that many practical data offerings can be viewed as structured information disclosures. These works are complementary to our approach: while they emphasize market organization and downstream strategic effects, our model abstracts the information product itself as a menu of statistical experiments, allowing us to study optimal design and—crucially for high-dimensional data—computationally efficient methods for constructing revenue-maximizing menus.

\section{Preliminaries}
\label{sec:prelim}

In this section, we describe the timeline of interactions between the buyer and the seller. Following certain simplifications that can be made without loss of generality on the space of messages that the buyer sends, we see that the revenue maximization problem can be recast as a convex program, albeit one whose size scales linearly with the number of states. Finally, to motivate the sale of differentiated data products, we characterize the range of advantages that the design of different data products has over pricing alone and its ability to overcome seller-side information asymmetry.

\vspace{1em}
\subsection{Model}
We begin by formalizing the model we study. The buyer has a state-dependent utility that they want to maximize by taking an action from an action space.
The buyer has a private {\em type} that captures their (ex-post) utility as a function of both the the action and the unknown state. Although the seller is not aware of the buyer's private type, they know that the type follows a known distribution. The seller determines a menu of experiments with corresponding prices to offer for sale to the buyer. Each experiment maps the state to a distribution over a space of signals. Attaching notation to these terms, we have:
\begin{itemize}
    \item The {\em state} $\omega$ is drawn from a distribution $\mu$, a publicly known common prior supported on a possibly uncountable but measurable state space $\Omega$. 
   
    \item The finite {\em action} space $\mathcal{A}$ is of size $m=|\mathcal{A}|$. 
        
     \item The buyer's {\em type} $i$, belonging to $[n]$, determines their utility function $u^i:\Omega \times \mathcal{A} \mapsto [0,1]$. The distribution over types is captured by $f$, which is a member of the $(n-1)$-dimensional probability simplex.
    \item A statistical {\em experiment} is described as $E = (S, \pi)$, where $S$ is the finite set of {\em signals}, and $\pi: \Omega \mapsto \Delta S$ is the signaling function that maps each state $\omega$ to a distribution $\pi(\cdot | \omega)$ over the signals. 
    The fact that $S$ is finite holds without loss of generality, as we will discuss momentarily.
    \item A {\em menu} of experiments can be written as $\mathcal{M} = \{\mathcal{E}, t\}$, where $\mathcal{E}$ is a set of experiments, and $t:\mathcal{E} \mapsto \mathbb{R}_{\geq 0}$ is the price function that denotes the price $t(E)$ of each experiment $E \in \mathcal{E}$.
\end{itemize}

The precise game between the buyer and the seller modeling the design, sale and eventual usage of the menu of experiments proceeds as follows: 
\begin{enumerate}
\item The seller posts a menu $\mathcal{M} = \{\mathcal{E}, t\}$.
\item The type $i$ of the buyer is determined.
\item The buyer chooses an experiment $E=(S,\pi) \in \mathcal{E}$ and agrees to pay a (randomized) price $t(E)$.
\item The state $\omega$ is realized, and the seller sends a signal $s$ that is drawn from $\pi(\cdot \mid \omega)$.
\item The price $t(E)$ is fully determined and paid.
\item The buyer updates their belief about the state $\omega$ based on the observed signal $s$ and chooses an optimal action $a$ maximizing their expected utility.
\end{enumerate}
In later sections, we will also describe how our approach can be easily adapted to settings where the price is realized and paid before the state is realized.

In the above setup, both the buyer and the seller are expected utility-maximizing agents. Thus, in the absence of any privileged information, a buyer of type $i$ has a baseline utility of $$u^i := \max_{a \in \mathcal{A}} \mathbb{E}_{\omega \sim \mu}[u^i(\omega, a)],$$ resulting from taking the action $$a^i \in \argmax_{a \in \mathcal{A}} \mathbb{E}_{\omega \sim \mu}[u^i(\omega, a)].$$

Now, if the same buyer receives the signal $s$ from experiment $E=(S,\pi)$, they revise their belief before selecting an action. Hence, assuming that the signal $s$ is sampled with a non-zero probability overall,
their conditional expected utility will be
\begin{align*}
u^i(s,E) &:= 
\max_{a\in \mathcal{A}} 
\mathbb{E}_{\omega \sim \mu|(s, E)}[u^i(\omega, a)] \\
&=
\max_{a\in \mathcal{A}}
\mathbb{E}_{\omega \sim \mu}\!\left[
 \frac{\pi(s \mid \omega)}
      {\mathbb{E}_{\omega'\sim\mu}[\pi(s \mid \omega')]}
 u^i(\omega, a)
\right],
\end{align*}
where $\mu|(s,E)$ reflects the buyer's posterior belief of the state $\omega$ having observed the signal $s$ generated from the experiment $E$. Thus, the buyer will select an action from the set $a^i(s,E)$ denoting the actions which achieve this maximum. 

Together with $\mu$, each experiment $E=(S,\pi)$ produces a marginal distribution $\pi \circ \mu$ on the signal space. We can use this to calculate the {\em value} of an experiment $E=(S,\pi)$ for a buyer of type $i$:
\begin{align*}
V(E,i) & := 
\mathbb{E}_{s\sim \pi\circ \mu}\bigl[u^i(s, E)\bigr] \\
&=  
\sum_{s \in S} 
\mathbb{E}_{\omega \sim \mu}\!\bigl[\pi(s \mid \omega)\bigr]\,
u^i(s,E) \\
&= 
\sum_{s \in S} 
\max_{a\in \mathcal{A}} 
\mathbb{E}_{\omega \sim \mu}
\!\bigl[\pi(s \mid \omega)\, u^i(\omega,a)\bigr].
\end{align*}
Note that since $V(E,i)$ is a point-wise maximum of linear functions, it is in fact convex in $\pi$.

\vspace{1em}
\subsection{The Revenue Maximization Problem}
The problem we study, which is the seller's problem, is to select the menu of experiments (including the corresponding prices) so as to maximize expected revenue. It may be unclear at first glance how exactly this problem can be addressed given the generality of the model: experiments can be composed from any mapping of the state space $\Omega$ to distributions over any signal space $S$, and a menu can be composed of any number of such experiments. Fortunately, two classic reductions, which can be made without loss of generality, simplify this problem.

First, by the revelation principle \citep{myerson1982optimal}, the seller can restrict their attention to direct mechanisms, where there are as many items on the menu as the number of types of the buyer. In addition, the seller can restrict to incentive compatible menus, where the menu has $n$ entries $\{E^1,\ldots,E^n\}$ such that each type $i$ favors buying $E^i$ over other experiments. Generically, a buyer of type $i$ is incentivized to purchase an experiment $E^i$ in isolation if and only if $$V(E^i, i) - t(E^i) \geq u^i.$$
Viewed as a constraint on the menu, this is typically referred to as {\em individual rationality (IR)}.
Moreover, every buyer selects the experiment that maximizes their own expected utility, so for any $i, j \in [n]$, we have
$$V(E^i, i) - t(E^i) \geq    V(E^j, i) - t(E^j).$$ These are the {\em incentive compatibility (IC)} constraints. We will abbreviate $t(E^i)$ by $t^i$ going forward.
 
Second, following a subtler application of the revelation principle in \cite{bergemann2018design}, the seller can further take $S=\mathcal{A}$, which means that the signal and action spaces are the same, and restrict their search to {\em responsive} menus without loss of generality. A responsive menu is one in which every signal in each experiment corresponds to the optimal action it results in for the intended buyer. Concretely, a responsive menu requires that $a\in a^i(a,E^i)$ for all types $i$ in $[n]$ and signals $a$ in $\mathcal{A}$. The intuition here is that, given a non-responsive experiment for type $i$, one can merge and relabel all signals that result in the same action for type $i$, without hampering incentive compatibility constraints. Responsiveness can be enforced through {\em obedience} constraints \citep{bergemann2019information} as
$$ \mathbb{E}_{\omega \sim \mu} [\pi(a \mid \omega) u^i(\omega,a)] = \max_{a'\in \mathcal{A}} \mathbb{E}_{\omega \sim \mu} [\pi(a \mid \omega) u^i(\omega,a')], \qquad \forall a\in \mathcal{A}, i\in [n].$$
Conveniently, since every signal from $E^i=(\mathcal{A}, \pi^i)$ in a responsive menu coincides with the utility-maximizing action for type $i$, $V(E^i, i)$ can further be written as a linear function of $\pi_i$ as
\begin{align*}
    \overline{V}(E^i,i), 
    = \sum_{a \in \mathcal{A}} \mathbb{E}_{\omega \sim \mu} [\pi^i(a \mid \omega) u^i(\omega,a)],
\end{align*}
where we denote $\overline{V}(E^i, i)$ differently from $V(E^i,i)$ to emphasize that it is a linear function, despite the fact that both take the same value for all types $i$.

Compiling the incentive compatibility, individual rationality, and obedience constraints, we can now formulate the design of the revenue-maximizing menu as the following optimization problem.  
\begin{equation}
\label{eq:GeneralOpt}
\begin{alignedat}{2}
\max_{\{\pi^i, t^i\}} \quad
& \sum_{i\in [n]} f_i t^i \\[0.5ex]
\text{s.t.}\quad
& \overline{V}(E^i, i) - t^i
    &&\geq V(E^j, i) - t^j,
    \quad \forall i,j \in [n] \\[0.5ex]
& \overline{V}(E^i, i) - t^i
    &&\geq u^i,
    \quad \forall i \in [n].
\end{alignedat}
\end{equation}
Clearly, the incentive compatibility and individual rationality constraints make an explicit appearance here. The obedience constraints are encoded implicitly in the first set of constraints whenever $i$ and $j$ coincide, because then the identical price terms on both sides cancel, and we are left with
$$ \sum_{a\in \mathcal{A}}\mathbb{E}_{\omega \sim \mu} [\pi(a \mid \omega) u^i(\omega,a)] \geq \sum_{a\in \mathcal{A}}\max_{a'\in \mathcal{A}} \mathbb{E}_{\omega \sim \mu} [\pi(a \mid \omega) u^i(\omega,a')], \qquad i\in [n],$$ which can readily be verified as being equivalent to obedience constraints, because for any function $f$, $\sum_{a\in \mathcal{A}} (f(a) - \max_{a'\in \mathcal{A}} f(a)) \geq 0$ implies that pointwise for all $a\in \mathcal{A}$ that $f(a)= \max_{a\in \mathcal{A}} f(a')$.
   Since $\overline{V}(E^i, i)$ and $V(E^j,i)$ are linear and convex, respectively, in $\pi^i$ and $\pi^j$, the formulation is a convex program. In fact, as noted in \cite{cai2021sell}, since all the convex functions that appear are piecewise linear, this is in fact a linear program.

Finally, it is important to emphasize that we have taken the buyers here to be heterogeneous in their utility functions $u^i$, but have assumed that they share a common prior distribution $\mu$ in the state. This is different from
\cite{bergemann2018design} and \cite{cai2021sell}, where the reverse is assumed: buyers have different priors, but share a utility function. (\cite{cai2021sell} also note briefly that generalizing their approach to allow buyers to vary on utility functions is straightforward.)
We conclude here by noting that, under mild conditions, our optimization problem is equivalent to that under the most general model with buyer-specific prior distributions. Specifically, if $\mu^1,\ldots,\mu^n$ are prior distributions (over $\Omega$) corresponding to each buyer type, the resulting optimization problem takes the same form of \Cref{eq:GeneralOpt}, although with expectations taken over the type-specific priors $\mu^i$ as opposed to $\mu$:
\begin{align*}
u^i
&= \max_{a \in \mathcal{A}}
    \mathbb{E}_{\omega \sim \mu^i}[u^i(\omega,a)], \\
V(E^j,i)
&= \sum_{s \in S} \max_{a\in \mathcal{A}}
    \mathbb{E}_{\omega \sim \mu^i}
    \!\bigl[\pi^j(s \mid \omega)\, u^i(\omega,a)\bigr].
\end{align*}
These expectations over $\mu^i$ can be replaced with ones over $\mu$, so long as the changes of measure $d\mu^i/d\mu$ exist (for example, if $\mu^i$ is absolutely continuous with respect to $\mu$) which can then be encoded into the utility functions. Thus, the problem with prior distributions $\mu^i$ and utility functions $u^i(\omega,a)$ is equivalent to the problem with a shared prior $\mu$ and utility functions $u^i(\omega,a)\frac{d\mu^i}{d\mu}(\omega)$.


   \vspace{1em}
\subsection{The Value of Differentiated Data Products}
Before we introduce our main results, we make an observation that underscores the benefits (and limits) of offering multiple data products on a menu. Consider three related quantities:
\begin{itemize}
    \item Let $R_\textrm{menu}$ be the revenue generated by an optimal menu.
    \item Let $R_\textrm{one}$ be the maximum revenue generated by a menu with a single experiment, in which case such an experiment can be taken to be the fully informative experiment that reveals the entire state without loss of generality. Hence, the only lever available to the seller is the price.
    \item Let $R_\textrm{full-info}$ be the revenue obtainable in an oracular setting where the buyer's type is revealed explicitly to the seller before designing the menu, and thus the seller indulges in first-degree (perfect) price discrimination. This is therefore the revenue in a setting without any seller-side information asymmetry. 
\end{itemize}
   It is easy to see that $R_\textrm{one} \leq R_\textrm{menu} \leq R_\textrm{full-info}$, and that these equalities are realized for certain settings, for example, if all types share the same utility function. We are interested in the ratios $R_\textrm{one}/R_\textrm{menu}$ that captures the benefit of offering differentiated products, and ${R_\textrm{menu}}/{R_\textrm{full-info}}$ that captures the cost of information asymmetry. To characterize these, we note the following result, where the existence clause in fact conforms to the specialized Gaussian setting we study in \Cref{sec:gaussian}. To that end, we have the following result, whose proof is in \Cref{app:value_of_diff}:

\begin{proposition} \label{prop:value_of_diff}
For any setting of type-dependent preferences, we have
$$\frac{1}{n} \le \frac{R_\mathrm{one}}{R_\mathrm{menu}} \le 1 \quad \text{and} \quad \frac{1}{n} \le \frac{R_\mathrm{menu}}{R_\mathrm{full-info}} \le 1.$$ Furthermore, for any $\varepsilon>0$, there exists a setting in which 
$$ \frac{R_\mathrm{one}}{R_\mathrm{menu}} \le \frac{1}{n} + \epsilon \quad \text{and} \quad \frac{R_\mathrm{menu}}{R_\mathrm{full-info}} = 1$$
hold simultaneously.
\end{proposition}
The key takeaway from this result is that the ability to provide multiple data products can be a substantial benefit, even over an optimally priced single product, and, in certain settings, completely overcome the information asymmetry to which the seller is subject. 

\section{Near-optimal Menus for Large State Spaces}
\label{sec:sampleLP}
We are now prepared to state our main result. Recall that $n$ and $m$ denote, respectively, the number of buyer types and actions. Our proposed algorithm is given in Algorithm~\ref{alg:algo}.

\begin{algorithm}    
\caption{Lazy Experiments via Linear Programming}
\label{alg:algo}
    \textbf{Input:} Buyer's type $i\in [n]$, the realized state $\overline{\omega}$.  \\
    \textbf{Parameters:} Sample budget $K$.
    \begin{algorithmic}[1]
    	\State Draw $K-1$ samples $\omega_{2:K} \sim \mu$, and set $\omega_1 := \overline{\omega}$.
    	\State Solve the linear program below to obtain an optimal solution $\{\pi^i, t^i\}$.
    \begin{align}
    \max &\sum_{i \in [n]} f_it^i \label{lp:cai}\\
    \text{s.t.} &\sum_{a \in \mathcal{A}} \dfrac{1}{K} \sum_{k = 1}^K \pi^i(a \mid \omega_k)u^i(a, \omega_k) - t^i \geq  \sum_{a \in \mathcal{A}} v_{a,i,j} - t^j& \  \forall i,j \in [n]\label{lp:ic} \\
    & v_{a,i,j} \geq \dfrac{1}{K} \sum_{k = 1}^K \pi^j(a \mid \omega_k)u^i(a', \omega_k) & \forall a,a' \in \mathcal{A}, \; i , j \in [n]\nonumber\\
    & \sum_{a \in \mathcal{A}} \dfrac{1}{K} \sum_{k = 1}^K \pi^i(a \mid \omega_k)u^i(a, \omega_k) - t^i \geq  \max_{a \in \mathcal{A}} \frac{1}{K}\sum_{k=1}^Ku^i(a,\omega_k), & \forall i \in [n]\label{lp:ir}\\
    &\sum_{a \in \mathcal{A}}\pi^{i}(a\mid \omega_k)= 1, &\forall i \in [n], k\in [K]\nonumber\\
    &\pi^{i}(a\mid \omega_k) \geq 0, &\forall i\in [n], k\in [K], a\in \mathcal{A}\nonumber
	\end{align}
    	\State Output a signal sampled from $\pi^i(\cdot \mid \overline{\omega})$ and the corresponding price $t^i$.
\end{algorithmic}
\end{algorithm}

\begin{remark}\label{rem:order}
The linear program (LP) in \Cref{lp:cai} is naturally symmetric with respect to its inputs $\omega_{1:K}$. We assume that the LP solver respects this symmetry, that is, if the solver is run with two different permutations of $\omega_{1:K}$, for any fixed $a\in \mathcal{A}$ and $\omega\in \omega_{1:K}$, the output $\pi^i(a \mid \omega)$, or its distribution if the outputs are random, is identical. This can always be ensured by first randomly permuting $\omega_{1:K}$ before the LP is formed.
\end{remark}

Our main guarantee for \Cref{alg:algo} is as follows.

\begin{theorem} \label{thm:main}
	Let $E^i$ be the experiment that \Cref{alg:algo} executes for input type $i$ with sample budget $K$, and let $t^i$ be the corresponding (randomized) price.
    For any \(\varepsilon,\delta\in(0,1)\), there exists K satisfying \[K = O\left(\min\left\lbrace\dfrac{n^2}{\varepsilon^2}, \frac{1}{\varepsilon^4}\right\rbrace\left(m\log \frac{mn}{\delta}\right)\right)\] such that $\{E^i, t^i\}$ is an incentive-compatible direct menu whose revenue is within $\varepsilon$ of the optimal with probability at least $1 - \delta$, that is, if $R^*$ is the revenue of an optimal menu and $\textrm{Rev}(\mathcal{A})$ is the revenue associated with \Cref{alg:algo} averaged over the buyer's type distribution, we have \[ \textrm{Rev}(\mathcal{A}) \;\ge\; \textrm{R*} - \varepsilon.\]
\end{theorem}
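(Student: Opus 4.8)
The plan is to prove the two assertions of the theorem separately: that the randomized mechanism is incentive compatible and individually rational under the true prior $\mu$, which I would establish \emph{exactly} via an exchangeability argument, and that its realized revenue is at least $R^*-\varepsilon$, which would follow from concentration together with a feasibility-repair bound whose cost dictates the sample budget.

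\textbf{Feasibility via exchangeability.} First I would use that, unconditionally, the realized state $\omega_1=\overline\omega$ and the $K-1$ fresh draws $\omega_{2:K}$ are i.i.d., and that (by the symmetrization remark) the solver treats the $K$ slots identically. Thus the realized state is exchangeable with the samples, and the expected utility delivered to the intended type $i$ equals the $\mu$-expectation of the empirical linear value $\overline V_{\mathrm{emp}}(E^i,i)=\sum_a \tfrac1K\sum_k \pi^i(a\mid\omega_k)u^i(a,\omega_k)$ that appears in the LP. For the diagonal this is in fact an identity for the true value: the constraint with $j=i$ forces the empirical obedient value $\overline V_{\mathrm{emp}}(E^i,i)$ to equal the empirical best-response value $\sum_a v_{a,i,i}$, and combined with Jensen's inequality this squeezes $V(E^i,i)$ to exactly $\mathbb E[\overline V_{\mathrm{emp}}(E^i,i)]$, so obedience itself transfers to $\mu$. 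For a deviation $E^j$, the per-signal maximum over best responses is convex, so Jensen gives $V(E^j,i)\le \mathbb E[\sum_a v_{a,i,j}]$, i.e.\ in the direction that only makes a deviation look more attractive. Since the LP enforces the empirical IC and IR inequalities for \emph{every} realization of the samples, taking expectations and substituting these two facts turns them into the true-prior IC and IR constraints; hence the menu is an exactly feasible direct mechanism, and, as a by-product, the expected collected revenue equals $\mathbb E[\sum_i f_i t^i]$.

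\textbf{Revenue via restriction and concentration.} For the revenue bound I would exhibit, on a high-probability event, a single feasible point of the sampled LP with objective at least $R^*-\varepsilon$; optimality of the solver then lower-bounds the realized objective $\sum_i f_i t^i$. Starting from a true-optimal menu $\{\pi^*,t^*\}$ of value $R^*$ and restricting each $\pi^{*,i}$ to the sampled states, a Hoeffding union bound over the $O(m^2n^2)$ sample averages entering the constraints shows that every empirical constraint value lies within $\eta=O\!\big(\sqrt{m\log(mn/\delta)/K}\big)$ of its $\mu$-counterpart with probability $1-\delta$. Obtaining $\sqrt m$ rather than $m$ here is itself a small technical step: the $m$ per-signal fluctuations should be combined through a Bernstein/variance bound, using the normalization $\sum_a \pi^{*,i}(a\mid\omega)=1$ to cap the summed variances by $O(1/K)$ rather than $O(m/K)$. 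On this event the restricted menu violates each empirical IC/IR inequality by at most $O(\eta)$, and it remains only to repair it to exact empirical feasibility at small revenue cost.

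\textbf{The repair cost, and the main obstacle.} The two terms in the sample bound arise from two ways of paying for this repair, which I expect to be the crux of the proof. Reducing prices to reinstate the violated IC inequalities amounts to solving a system of difference constraints $\rho_i-\rho_j\ge c_{ij}$ with $c_{ij}=O(\eta)$, whose cheapest solution is a longest-path potential; in the worst case this chains across all $n$ types and costs $O(n\eta)$ in revenue, so equating $O(n\eta)=\varepsilon$ gives $\eta=\varepsilon/n$ and $K=O\!\big(n^2\varepsilon^{-2}\,m\log(mn/\delta)\big)$. To make the budget independent of $n$, I would first collapse the effective number of contracts, for instance by rounding prices to an $\varepsilon$-grid and merging types with (nearly) identical contracts: this gives back only $O(\varepsilon)$ revenue but caps the longest repair chain at $O(1/\varepsilon)$ levels rather than $n$, so the repair costs only $O(\eta/\varepsilon)$; balancing $\eta/\varepsilon=\varepsilon$ yields $\eta=\varepsilon^2$ and the $n$-independent rate $K=O\!\big(\varepsilon^{-4}\,m\log(mn/\delta)\big)$, and taking the smaller budget gives the stated $\min\{n^2/\varepsilon^2,\,1/\varepsilon^4\}$. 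This $n$-independent repair is the delicate step I expect to absorb most of the work: unlike the exchangeability argument it cannot be made exact, and one must check that collapsing contracts and lowering prices keeps the solution feasible for the sampled LP while surrendering only $O(\varepsilon)$ of revenue.
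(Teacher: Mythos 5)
Your overall architecture matches the paper's proof: an exchangeability argument giving \emph{exact} IC/IR under $\mu$ (this is the paper's \Cref{lem:ICIR}, and your treatment of the diagonal/obedience squeeze and the Jensen direction for deviation values is correct), followed by restricting a true-optimal menu to the sampled states, a concentration bound of order $\sqrt{m\log(mn/\delta)/K}$ (the paper gets the $\sqrt{m}$ by a union bound over the $m^m$ maps $\sigma:\mathcal{A}\to\mathcal{A}$, treating $\sum_a \pi(a\mid\omega)u^i(\sigma(a),\omega)$ as a single $[0,1]$-valued variable; your Bernstein-plus-normalization route gives the same rate and is fine), and then a feasibility repair whose cost dictates the two terms in the budget. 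The gap is in the repair step --- exactly the step you flag as the crux.

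Pure price reduction cannot restore exact IC, so your difference-constraint system $\rho_i-\rho_j\ge c_{ij}$ is infeasible in general: approximate IC permits $c_{ij}=\eta$ and $c_{ji}=\eta$ simultaneously (two types each strictly preferring the other's contract by $\eta$), and then the two constraints sum to $0\ge 2\eta$ --- a positive-weight cycle, so no longest-path potential exists. The paper's repair (\Cref{lem:revenueloss}) therefore does more than adjust prices: it sorts prices and shifts $t^i=\widetilde t^i-i\eta$ only to guarantee that each type can prefer no contract with a \emph{lower} index, then \emph{reassigns} each type to whichever contract it actually prefers (so the final menu may assign one experiment to several types), and finally relabels signals to the reassigned owner's best-response actions --- a Blackwell coarsening that restores exact obedience. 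This last step is not optional: the sampled LP's diagonal terms are linear (obedient) values, so a reassigned contract is infeasible for the LP until its signals are relabeled, and one must check (as the paper does) that coarsening leaves the intended type's value unchanged while only lowering other types' deviation values. Both of your routes need reassignment-plus-coarsening spliced in; once it is, your $\varepsilon$-grid idea (shift by group index across $O(1/\varepsilon)$ price levels, reassign within or above groups, cost $O(\varepsilon+\eta/\varepsilon)$) does appear to recover the $n$-independent $1/\varepsilon^4$ term, and would be a genuinely different derivation from the paper's, which instead invokes the $(1-\sqrt{\eta})$ price-scaling transformation (Lemma 9 of \cite{cai2021sell}) to guarantee revenue $R^*-3\sqrt{\eta}$.
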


We also briefly remark that the $O(1/\varepsilon^2)$ scaling of the sample complexity in $\varepsilon$ is optimal for any sample-based algorithm. We give an elementary proof of this claim in \Cref{sec:lb}. 

\begin{proposition}\label{prop:lb}
Let $R^*$ be the revenue of an optimal menu. Choose any $\varepsilon \in (0,0.1)$ and $\delta\in (0,1)$. Any algorithm $\mathcal{A}$ that guarantees a revenue within $\varepsilon$ of the optimal, that is, $\textrm{Rev}(\mathcal{A})\geq R^*-\varepsilon$, with probability at least $1-\delta$, and whose sole mode of accessing $\mu$ is through a sampling oracle, must sample $\Omega(\log (1/\delta)/\varepsilon^2)$ samples.
\end{proposition}

The key difference in \Cref{alg:algo}, compared to the linear program (\Cref{eq:GeneralOpt}), is that it does not produce explicit representations of the experiments, which live in $(\Delta \mathcal{A})^\mathcal{S}$ and hence whose sizes are exponential in the number of states. Instead, Algorithm~\ref{alg:algo} encodes an implicit representation of the experiment, along with a stochastic price, that a buyer would receive if they declare their type as being $i$. Such an implicit representation is sufficient insofar as it is able to sample signals for  just the realized state. In computational terms, the algorithm simply solves the linear program from \cite{cai2021sell} internally while restricting the state space to a small randomly chosen subset $S \subseteq \Omega$ pretending that the observed empirical distribution is the common prior. This computation can be performed in time polynomial in $m,n$ and $1/\varepsilon$, which crucially does not scale with the size of the state space. (Technically, {\em naming} or addressing any object taking $N$ possible values requires $\log N$ space and time in the worst case, but we treat this as a unit cost operation.) Since the realized state is part of $S$ by design, this allows us to generate the signal for it. We show that  for such an implicit menu, it is in the interest of the buyer to truthfully report their type. Notice that the price extracted depends on the state, but the way this is generated is agreed to when the buyer chooses an experiment-price pair. 

Although random sampling here is reminiscent of sample average approximation methods~\citep{saa-paper}, one cannot hope that the resolution of this sampling is such that the entire menu can be reconstructed, instead we show that a few samples are enough to {\em locally decode} the linear program and sample the signals for the realized state. This approach is at the center of many results in information and learning theory, for example, locally decodable error-correcting codes \citep{reed1953class}, transductive learning \citep{haussler1994predicting,kakade2005batch} and multiple-kernel learning \citep{lanckriet2004learning}.

More sophisticated variants of this argument occur in the literature on mechanism design \citep{hartline2011bayesian,cai2013understanding,weinberg2014algorithms} when dealing with rich-type spaces and mechanism-to-algorithm reductions. More recently, such techniques were applied to the Bayesian persuasion setting in \cite{dughmi2016algorithmic}, where, unlike in our setting, the receiver, or in our language, the buyer, does not have private preferences. Beyond differences in the problem setting, a key qualitative distinction of our work is that the proposed mechanism is exactly incentive compatible and exactly responsive. Previous applications of similar arguments resulted in approximate incentive compatibility (for example, in \cite{hartline2011bayesian,dughmi2016algorithmic}), which means that the resultant implementations require further assumptions on the agent's (or the buyer's) behavior, specifically that it must be willing to follow suboptimal non-utility-maximizing recommendations. In contrast, in our setup, buyers are exact utility maximizers; for this, our proof of \Cref{lem:revenueloss} relies on adjusting prices to ensure exact incentive compatibility and {\em coarsening} experiments correctly in the Blackwell sense \citep{blackwell1951comparison} to ensure exact obedience.

Finally, we remark that, as presented, the randomized price extracted by the mechanism depends on the realized state, which might not be desirable (see the discussion in \cite{babaioff2012optimal}). However, this is easy to fix. Consider a variant of the mechanism in which the buyer agrees to pay a (randomized) price determined by running \Cref{alg:algo} with the input state independently sampled from $\mu$. The signal is released by running \Cref{alg:algo} separately on the realized state as usual. By linearity of expectation and the affine dependence of the constraints and the objective in the prices, the marginal distribution of the revenue of this mechanism is the same as that of the original and incentive compatibility (\Cref{lem:ICIR}) continues to hold. 

\begin{proof}{Proof of Theorem \ref{thm:main}:}
We begin by showing that the procedure constructs an incentive-compatible menu, where each type $i$ is incentivized to report its type truthfully. This hinges on the fact that the random variables sampled algorithmically, namely $\omega_{2:K}$ and the realized state, $\omega$, are identical in distribution and exchangeable a priori. As a consequence of this result, we know that the type $i$ agrees to pay the (randomized) price $t^i$.
\begin{lemma}\label{lem:ICIR}
Let $(E^i,t^i)$ be the experiment-price pair obtained by executing \Cref{alg:algo} with input type $i$. Then the menu $\{E^i, t^i\}$ implements an incentive-compatible direct mechanism.
\end{lemma}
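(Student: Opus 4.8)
The plan is to prove incentive compatibility (and individual rationality) by reducing the buyer's \emph{expected} payoff under the randomized mechanism to the corresponding \emph{empirical} quantities appearing in the LP of \Cref{lp:cai}, and then invoking the fact that those constraints hold for every realization of the samples. The linchpin is exchangeability: since the realized state $\bar\omega$ is itself distributed as $\mu$ and the auxiliary samples $\omega_{2:K}$ are drawn i.i.d.\ from $\mu$, the full vector $\omega_{1:K}$ (with $\omega_1 = \bar\omega$) is i.i.d., hence exchangeable. Combined with the symmetry assumption on the solver (the Remark), this makes the joint law of $(\omega_k,\pi^j(\cdot\mid\omega_k))$ identical across $k$, so that for any types $i,j$, signal $a$, action $a'$, and bounded $g$,
\[
\mathbb{E}\big[\pi^j(a \mid \omega_1)\, g(\omega_1)\big] \;=\; \mathbb{E}\Big[\tfrac{1}{K}\textstyle\sum_{k=1}^K \pi^j(a\mid\omega_k)\, g(\omega_k)\Big].
\]
Establishing this identity cleanly is the technical heart of the argument.

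Next I would compute the expected net utility $U_i(j)$ of a type-$i$ buyer who reports $j$; risk neutrality lets us treat it as expected value minus expected price $\mathbb{E}[t^j]$. Because the buyer never observes the seller's private draws $\omega_{2:K}$, the effective experiment they face is the marginalized kernel $\tilde\pi^j(a\mid\bar\omega) = \mathbb{E}_{\omega_{2:K}}[\pi^j(a\mid\bar\omega)]$, so best-responding to each received signal yields value $\sum_{a}\max_{a'}\mathbb{E}_{\bar\omega,\omega_{2:K}}[\pi^j(a\mid\bar\omega)\,u^i(\bar\omega,a')]$. Applying the identity above with $g(\cdot)=u^i(\cdot,a')$, then the elementary bound $\max_{a'}\mathbb{E}[\,\cdot\,]\le\mathbb{E}[\max_{a'}\,\cdot\,]$ together with the defining constraints $v_{a,i,j}\ge \tfrac1K\sum_k \pi^j(a\mid\omega_k)u^i(\omega_k,a')$, gives the upper bound $\text{(value of reporting } j) \le \mathbb{E}[\sum_{a} v_{a,i,j}]$. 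For the truthful report I would instead \emph{lower} bound the value by having the buyer simply follow the recommendation, so that $\text{(value of reporting } i)\ge \mathbb{E}[\sum_a \tfrac1K\sum_k \pi^i(a\mid\omega_k)\,u^i(\omega_k,a)]$ via the same identity. Crucially, obedience enters only as a \emph{feasible} (not necessarily optimal) buyer strategy, so we never need the empirical obedience implied by the LP to transfer exactly to the true prior $\mu$; that refinement is deferred to the revenue analysis (\Cref{lem:revenueloss}).

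Finally, the LP's incentive constraint $\sum_a \tfrac1K\sum_k \pi^i(a\mid\omega_k)\,u^i(\omega_k,a) - t^i \ge \sum_a v_{a,i,j} - t^j$ holds for \emph{every} realization of $\omega_{1:K}$, hence in expectation; chaining the truthful lower bound, this expected inequality, and the deviation upper bound yields $U_i(i)\ge U_i(j)$, i.e.\ IC, and the participation constraint gives IR against the baseline $u^i$ by the identical reduction. The main obstacle I anticipate is the exchangeability step itself: the kernel $\pi^j$ depends jointly on all of $\omega_{1:K}$ \emph{including} the realized state, so one cannot treat $\bar\omega$ as independent of the experiment it induces. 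It is precisely the solver's permutation invariance that equalizes the per-index marginals and lets us reinterpret ``the realized state is $\omega_1$'' as ``the realized state is a uniformly random coordinate,'' which is what makes the averaged empirical quantities the correct objects to compare against the LP constraints.
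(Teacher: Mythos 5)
Your proposal is correct and follows essentially the same route as the paper's proof: both rest on the exchangeability of $\omega_{1:K}$ together with the solver's permutation invariance to equate empirical averages with expectations over $(\bar\omega,\omega_{2:K})$, then take expectations of the pointwise LP constraints and use $\max_{a'}\mathbb{E}[\cdot]\le\mathbb{E}[\max_{a'}\cdot]$ to dominate the deviating buyer's best-response value. Your explicit identification of the buyer's effective experiment as the marginalized kernel $\tilde\pi^j$, and the observation that obedience enters only as a feasible (not optimal) strategy for the truthful report, is a slightly more careful articulation of what the paper does implicitly, but it is the same argument.
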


By elementary concentration inequalities and then fixing constraints violations, we show that the LP built on sampled states has a feasible solution with an optimal value close to $R^*$. Since the prices set by the algorithm are selected to maximize the objective value, $\sum_{i=1}^{n} f_i t^i$ can only be greater.  

\begin{lemma}\label{lem:revenueloss}
    Let $R^*$ be the revenue of an optimal menu. Then there exists $K$ satisfying $K=O(\min\{{n^2}/{\varepsilon^2}, {1}/{\varepsilon^4}\}\left(m\log m{n}/{\delta}\right))$ such that, with probability at least $1-\delta$ over the realizations of $\omega_{2:K}$ and $\omega$, the LP formulated in \Cref{lp:cai} has a feasible solution $\{{\pi}^i, {t}^i\}$ with value $\sum_i f_i{t}^i \geq R^* - \epsilon$.
\end{lemma}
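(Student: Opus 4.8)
The plan is to start from an exactly optimal menu for the true problem \eqref{eq:GeneralOpt} and to show that its experiments, together with re-optimized prices, remain feasible for the sampled LP \Cref{lp:cai} with only a small loss in revenue. Let $\{\pi^{i,\star},t^{i,\star}\}$ attain $R^\star$ in \eqref{eq:GeneralOpt}, and recall that $\omega_1=\overline\omega$ and $\omega_{2:K}$ are all i.i.d.\ draws from $\mu$, so the empirical average $\tfrac1K\sum_k(\cdot)$ appearing in \Cref{lp:cai} is an unbiased estimate of the corresponding expectation under $\mu$. The crucial point is that $\pi^{i,\star}$ is fixed \emph{before} the samples are drawn, so I only need \emph{pointwise} concentration of finitely many functionals of the sample, never a uniform bound over all experiments; this is exactly what decouples the sample complexity from $|\Omega|$. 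Since $\pi^{i,\star}$ is a genuine conditional distribution it satisfies the simplex constraints of \Cref{lp:cai} exactly, so the whole argument reduces to controlling how far the empirical IC and IR constraints can move.

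Concentration is the routine step. Each diagonal value $\overline V(E^i,i)=\sum_a\mathbb E_\mu[\pi^{i,\star}(a\mid\omega)u^i(a,\omega)]$ is the expectation of a $[0,1]$-valued random variable (a convex combination of utilities), so Hoeffding gives additive error $O(\sqrt{\log(1/\delta')/K})$. For the off-diagonal values I would avoid the crude term-by-term bound (which loses a factor of $m$) by writing $V(E^j,i)=\max_{\sigma}\mathbb E_\mu[\sum_a\pi^{j,\star}(a\mid\omega)u^i(\sigma(a),\omega)]$, where $\sigma$ ranges over the $m^m$ deterministic maps $\mathcal A\to\mathcal A$; each inner expectation is again an average of $[0,1]$ variables, and a union bound over the $m^m$ maps costs only $m\log m$ in the exponent. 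Because a maximum of concentrated quantities is itself concentrated, a union bound over the $O(n^2)$ value pairs and $O(n)$ baselines (each at confidence $\delta'=\delta/\mathrm{poly}(mn)$) yields that, with probability $\ge1-\delta$, every empirical IC and IR constraint evaluated at $\{\pi^{i,\star},t^{i,\star}\}$ is violated by at most $\delta_0=O\!\big(\sqrt{m\log(mn/\delta)/K}\big)$.

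The heart of the proof is restoring \emph{exact} feasibility by adjusting prices alone. With the experiments frozen at $\pi^{i,\star}$, \Cref{lp:cai} becomes a pure difference-constraint (transshipment) LP in the prices: IC reads $t^i-t^j\le c_{ij}$ for empirical gaps $c_{ij}$, and IR reads $t^i\le b_i$. By the previous step $t^{\star}$ satisfies this system with every right-hand side slackened by $\delta_0$, so the re-optimized empirical price vector attains value $P\ge R^\star - L\delta_0$, where $L$ is the $\ell_1$-mass of the optimal dual. Here I would invoke LP duality: the free variables $t^i$ force stationarity $\sum_j y_{ij}-\sum_j y_{ji}+z_i=f_i$, and summing over $i$ (using $\sum_i f_i=1$ and cancellation of the IC multipliers $y$) gives $\sum_i z_i=1$, so the IR multipliers contribute only $\delta_0$ to the loss. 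The IC multipliers $y$ form an acyclic flow injecting total mass $1$ and supported on a forest of at most $n-1$ edges, each carrying flow at most $1$, so $\sum_{ij}y_{ij}=O(n)$ and $L=O(n)$. Choosing $K$ so that $L\delta_0\le\varepsilon$ gives the first branch $K=O\!\big(n^2 m\log(mn/\delta)/\varepsilon^2\big)$, with the re-optimized solution exactly feasible and of revenue $\ge R^\star-\varepsilon$.

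I expect this price-adjustment step to be the main obstacle, for a structural reason: a \emph{uniform} price shift leaves every IC constraint unchanged, since IC depends only on price \emph{differences}, so the fix must respect the difference-constraint geometry and its cost is genuinely governed by the dual/chain structure rather than by any one-line correction. The $O(n)$ chain bound is loose when $n$ is large, which is precisely why the target complexity is a minimum of two regimes: the chain argument yields the $n^2/\varepsilon^2$ branch, while the $n$-free $1/\varepsilon^4$ branch should come from driving the violations down to $\delta_0=O(\varepsilon^2)$ and repairing feasibility by an $n$-independent \emph{softening} of the menu (e.g.\ mixing each experiment with the uninformative experiment to manufacture uniform IC slack) rather than by chained price cuts, and then taking the better of the two bounds. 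Pinning down that softening so that it restores exact—rather than approximate—feasibility while costing only $O(\varepsilon)$ revenue is the delicate part; the concentration analysis is mechanical by comparison.
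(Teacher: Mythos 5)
Your concentration step matches the paper's exactly (including the union bound over the $m^m$ maps $\sigma:\mathcal{A}\to\mathcal{A}$, which is what keeps the exponent at $m\log m$), and your duality/flow bound on the repair cost is a legitimate alternative to the paper's device for the off-diagonal constraints (the paper sorts prices, sets $t^i=\widetilde t^i-i\varepsilon$, and then hands each type whichever entry it now prefers). But your central claim --- that feasibility can be restored \emph{by adjusting prices alone} --- fails on a family of constraints you have overlooked. The constraints of \Cref{lp:cai} are indexed by all pairs $(i,j)$ \emph{including} $i=j$, and the diagonal constraints (those involving $v_{a,i,i}$) are the empirical obedience/responsiveness constraints: $\sum_{a}\frac1K\sum_k\pi^i(a\mid\omega_k)u^i(a,\omega_k)\geq\sum_a\max_{a'}\frac1K\sum_k\pi^i(a\mid\omega_k)u^i(a',\omega_k)$. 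Prices cancel in these, so no choice of $t$ can repair them; yet the optimal responsive menu $\pi^{*i}$, which is obedient under $\mu$, will generically violate empirical obedience by $\Theta(\delta_0)$ on a finite sample. Hence your frozen-experiment difference-constraint LP is typically \emph{infeasible} (in your dual, the diagonal multiplier $y_{ii}$ drops out of stationarity and multiplies a strictly negative $c_{ii}$, so the dual is unbounded), and the quantity $P$ in your inequality $P\geq R^\star-L\delta_0$ is $-\infty$. The missing idea is an experiment-side repair: relabel each signal $a$ of $\pi^{*i}$ to the \emph{empirically} best action for type $i$, i.e.\ coarsen the experiment in the Blackwell sense. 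Because a sum of functions each maximized at the same point is maximized there, the merged experiment is exactly empirically obedient; moreover the coarsening weakly increases the diagonal value $\widehat V(\pi^i,i)$ and weakly decreases every off-diagonal best-response value $\sum_a v_{a,i,j}$, so it only helps the remaining constraints. Only after this step can a price-only repair (yours or the paper's) go through; this is precisely why the paper's argument pairs price adjustments with Blackwell coarsening.

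The second genuine shortfall is that your $1/\varepsilon^4$ branch is not established, and the mechanism you gesture at would not work. Mixing each experiment with the uninformative one scales the \emph{value} differences $V(E^j,i)-V(E^i,i)$ by $(1-\eta)$ while leaving the \emph{price} differences untouched; a short computation shows this only yields an upper bound $t^j\leq t^i+O(\delta_0/\eta)$ on the price of the entry a deviating type migrates to, i.e.\ it pushes buyers toward \emph{cheaper} entries --- the opposite of a revenue-preserving repair. What is needed is the reverse scaling, on prices: setting $t^i=(1-\sqrt{\varepsilon})\widetilde t^i-\varepsilon$ (Lemma 9 of \cite{cai2021sell}, invoked by the paper) shrinks price differences while leaving value differences intact, so any post-repair deviation by type $i$ is to an entry priced at least $t^i-\delta_0/\sqrt{\varepsilon}$, giving an $n$-independent revenue loss of $O(\sqrt{\delta_0})$ after each type is reassigned its favorite entry; with $\delta_0=\Theta(\varepsilon^2)$ this is the $K=O(m\log(mn/\delta)/\varepsilon^4)$ branch. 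As written, your proposal proves (after the obedience gap above is closed) only the $n^2/\varepsilon^2$ branch of the stated bound.
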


It remains to relate the objective value of the LP to the revenue of the algorithm. To this end, we note that although the prices $\{t^j\}_{j=1}^n$ generated via the LP depend on the realized state, they are realized independently of the buyer's type. The buyer's type only determines which of these is charged to the buyer. Thus, for any stochastic run, the algorithm earns $\sum_{j=1}^n t^j \mathbf{1}_{j=i}$, where $i$ is the buyer's type that the algorithm takes as input. Hence, the algorithm's revenue, when averaged over the type realizations, but conditioned on the internal randomness of the algorithm, that is, fixing $\omega, \omega_{2:K}$, and hence $\{t^j\}_{j=1}^n$, is
$$ \textrm{Rev}(\mathcal{A}) = \mathbb{E}_{i\sim f} \left[\sum_{j=1}^n t_j \mathbf{1}_{j=i}\omega\right] = \sum_{j=1}^n t_j \mathbb{E}_{i\sim f}[\mathbf{1}_{j=i}] = \sum_{j=1}^n f_jt^j.$$
As argued above, the latter quantity exceeds $R^*-\varepsilon$ with probability $1-\delta$.
\end{proof}

\vspace{1em}
\subsection[Proof of Lemma~\ref{lem:ICIR}]{Proof of \Cref{lem:ICIR}: Incentive Compatibility}
We wish to establish that incentive compatibility and individual rationality properties hold for this menu, specifically that for any buyer with type $i$, we want that
    \begin{align*}
    	\mathop{\mathbb{E}}_{\substack{\omega_{2:K}\sim \mu \\ \overline{\omega} \sim \mu }}\left[ \sum_{a \in \mathcal{A}}\pi^i (a \mid \overline{\omega}) u^i(a, \overline{\omega})- t^i\right] &\geq \max_j\left\lbrace \sum_{a \in \mathcal{A}}\max_{a'\in \mathcal{A}} \mathop{\mathbb{E}}_{\substack{\omega_{2:K}\sim \mu \\ \overline{\omega} \sim \mu }} [\pi^j(a \mid \overline{\omega})u^i(a', \overline{\omega}) - t^j]\right\rbrace, \\
        \mathop{\mathbb{E}}_{\substack{\omega_{2:K}\sim \mu \\ \overline{\omega} \sim \mu }}\left[ \sum_{a \in \mathcal{A}}\pi^i (a \mid \overline{\omega}) u^i(a, \overline{\omega})- t^i\right] &\geq u^i.
    \end{align*}
	for all $i$ in $[n]$. From the definition of the LP in \Cref{lp:cai}, we have for all $i$ in $[n]$ that
   \begin{align*}
   	  \sum_{a \in \mathcal{A}} \dfrac{1}{K} \sum_{k = 1}^K \pi^i(a \mid \omega_k)u^i(a, \omega_k) - t^i &\geq \max_{j}\left\lbrace\sum_{a\in \mathcal{A}} v_{a,i,j} - t^j\right\rbrace\\
   	  &\geq  \max_j\left\lbrace \sum_{a \in \mathcal{A}} \max_{a'\in \mathcal{A}}\dfrac{1}{K} \sum_{k = 1}^K \pi^j(a \mid \omega_k)u^i(a', \omega_k) - t^j\right\rbrace,\\
   	  \sum_{a \in \mathcal{A}} \dfrac{1}{K} \sum_{k = 1}^K \pi^i(a \mid \omega_k)u^i(a, \omega_k) - t^i &\geq \max_{a\in \mathcal{A}}\frac{1}{K}\sum_{k=1}^K u^i(a,\omega_k).
   \end{align*}
       
	Thus, a natural strategy is to take the expectations of these inequalities with respect to the randomness in $\omega_{1:K}$, while utilizing the fact that $\omega_{1:K}$ are identically and independently sample random variables, and hence are exchangeable. Indeed, this approach succeeds, but one has to take care that, being the result of an optimization process, $\pi^i$ itself is not fixed, but a function of $\omega_{1:K}$. Fix any $i$ and $j$ in $[n]$ and $a$ and $a'$ in $\mathcal{A}$. To deal with this, observe that
	 \begin{align*}
     \mathbb{E}_{\omega_{1:K} \sim \mu}\left[ \dfrac{1}{K} \sum_{k = 1}^K \pi^j(a \mid \omega_k)u^i(a', \omega_k)\right] &= \dfrac{1}{K} \sum_{k = 1}^K \mathbb{E}_{\omega_{1:K} \sim \mu}\left[\pi^j(a \mid \omega_k)u^i(a', \omega_k)\right] \\
        &=_{(i)} \dfrac{1}{K} \sum_{k = 1}^K \mathbb{E}_{\omega_{1:k-1}, \overline{\omega}, \omega_{k+1:K} \sim \mu}\left[\pi^j(a \mid \overline{\omega})u^i(a', \overline{\omega})\right] \\
        &=_{(ii)} \dfrac{1}{K} \sum_{k = 1}^K \mathbb{E}_{\overline{\omega}, \omega_{2:K} \sim \mu}\left[\pi^j(a \mid \overline{\omega})u^i(a', \overline{\omega})\right] \\
        &=\mathbb{E}_{\overline{\omega}\sim \mu}\mathbb{E}_{\omega_{2:K} \sim \mu}\left[\pi^j(a \mid \overline{\omega})u^i(a', \overline{\omega})\right],
    \end{align*}
    where (i) follows by renaming $\omega_k$ to $\overline{\omega}$ for each term within the sum, and (ii) utilizes the fact that all $\omega$'s are identically distributed and, hence, can be permuted. Note that in step (ii) this reordering of $\omega$'s does not alter $\pi^j(a \mid \overline{\omega})$, since although $\pi_j$ depends on the set of $\omega$'s, it is not affected by their ordering, as we have previously noted in \Cref{rem:order}. 
    
    Using this observation, we can prove the incentive compatibility property as
    \begin{align*}
    	\mathop{\mathbb{E}}_{\substack{\omega_{2:K}\sim \mu \\ \overline{\omega} \sim \mu }}\left[ \sum_{a \in \mathcal{A}}\pi^i (a \mid \overline{\omega}) u^i(a, \overline{\omega})- t^i\right] &= \sum_{a\in \mathcal{A}} \mathop{\mathbb{E}}_{\substack{\omega_{2:K}\sim \mu \\ \overline{\omega} \sim \mu }}\left[ \pi^i (a \mid \overline{\omega}) u^i(a, \overline{\omega})- t^i\right]\\
    	&=_{(i)} \sum_{a\in \mathcal{A}}\mathbb{E}_{\omega_{1:K} \sim \mu}\left[ \dfrac{1}{K} \sum_{k = 1}^K \pi^i(a \mid \omega_k)u^i(a, \omega_k) - t^i\right]\\
    	&\geq_{(ii)} \mathbb{E}_{\omega_{1:K} \sim \mu}\left[\max_j\left\lbrace \sum_{a \in \mathcal{A}} \max_{a'\in \mathcal{A}}\dfrac{1}{K} \sum_{k = 1}^K \pi^j(a \mid \omega_k)u^i(a', \omega_k) - t^j\right\rbrace\right]\\
    	&\geq_{(iii)} \max_j\left\lbrace \sum_{a \in \mathcal{A}} \max_{a'\in \mathcal{A}}\mathbb{E}_{\omega_{1:K} \sim \mu}\left[\dfrac{1}{K} \sum_{k = 1}^K \pi^j(a \mid \omega_k)u^i(a', \omega_k) - t^j\right]\right\rbrace\\
    	&=_{(i)} \max_j\left\lbrace \sum_{a \in \mathcal{A}} \max_{a'\in \mathcal{A}}\mathbb{E}_{\overline{\omega}\sim \mu}\mathbb{E}_{\omega_{2:K} \sim \mu}\left[\pi^j(a \mid \overline{\omega})u^i(a', \overline{\omega})-t^j\right]\right\rbrace,
    \end{align*}
    where all derivations labeled by (i) follow from the last observation, (ii) follows from the definition of the LP, and (iii) use the fact that for any random variable $X$, index set $\mathcal{I}$ and functions $\{f_i:i\in \mathcal{I}\}$, we have $\mathbb{E}_X \max_{i\in \mathcal{I}} f_i(X) \geq \max_{i\in \mathcal{I}}\mathbb{E}_X  f_i(X)$. The individual rationality property can be derived similarly as
        \begin{align*}
    	\mathop{\mathbb{E}}_{\substack{\omega_{2:K}\sim \mu \\ \overline{\omega} \sim \mu }}\left[ \sum_{a \in \mathcal{A}}\pi^i (a \mid \overline{\omega}) u^i(a, \overline{\omega})- t^i\right] &=_{(i)} \sum_{a\in \mathcal{A}}\mathbb{E}_{\omega_{1:K} \sim \mu}\left[ \dfrac{1}{K} \sum_{k = 1}^K \pi^i(a \mid \omega_k)u^i(a, \omega_k) - t^i\right]\\
    	&\geq_{(ii)} \mathbb{E}_{\omega_{1:K} \sim \mu}\left[\max_{a\in \mathcal{A}}\frac{1}{K}\sum_{k=1}^K u^i(a,\omega_k)\right]\\
    	&\geq_{(iii)} \max_{a\in \mathcal{A}}\mathbb{E}_{\omega_{1:K} \sim \mu}\left[\frac{1}{K}\sum_{k=1}^K u^i(a,\omega_k)\right]\\
    	&= \max_{a\in \mathcal{A}} \mathbb{E}_{\omega\sim \mu} u^i(a,\omega) = u^i,
    \end{align*}
    where the justifications for (i), (ii) and (iii) remain the same as above.
\vspace{1em}
\subsection[Proof of Lemma~\ref{lem:revenueloss}]{Proof of \Cref{lem:revenueloss}: Revenue Optimality}
We show that there exists a candidate solution to the LP described in \Cref{lp:cai} that achieves the optimal revenue, which, however, is only approximately feasible. Next, we repair the constraint violations to construct a feasible solution which achieves an ever so slightly smaller net revenue. Let us begin with the first step.

\begin{lemma}\label{lem:concentration}
Let $R^*$ be the revenue of an optimal menu. Fix any $\varepsilon,\delta>0$. For large enough $K$ satisfying $K=O((m\log {mn}{\delta}^{-1})/{\varepsilon^2})$, with probability at least $1-\delta$, there exist $\{\widetilde{\pi}^i, \widetilde{t}^i\}$ such that $\sum_i f_i\widetilde{t}^i = R^*$ and $\{\widetilde{\pi}^i, \widetilde{t}^i\}$ violates each IC (\Cref{lp:ic}) and IR (\Cref{lp:ir}) constraints by at most $\varepsilon$, while preserving other constraints in the LP (\Cref{lp:cai}).
\end{lemma}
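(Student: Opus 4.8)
The plan is to use the restriction of a population-optimal menu to the sampled states as the near-feasible candidate, and then to bound the gap between each sampled (empirical) constraint and its population counterpart via Hoeffding's inequality and a union bound. The crucial subtlety is to arrange the union bound so that the required sample size scales only linearly in $m$. Concretely, let $\{\pi^{i*}, t^{i*}\}$ be an optimal responsive menu for the population program~\eqref{eq:GeneralOpt}, achieving revenue $R^*$, and define $\widetilde{\pi}^i(a\mid\omega_k) \coloneq \pi^{i*}(a\mid\omega_k)$ on the sampled states $\omega_1,\dots,\omega_K$ together with $\widetilde{t}^i \coloneq t^{i*}$. Then $\sum_i f_i\widetilde{t}^i = \sum_i f_i t^{i*} = R^*$ holds exactly, since revenue depends only on the (unchanged) prices, and the simplex constraints $\sum_a \widetilde{\pi}^i(a\mid\omega_k)=1$ are inherited. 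It remains only to bound the violation of each IC and IR constraint.

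Fix an IC constraint indexed by $(i,j)$ in the LP of \Cref{lp:cai}; recall that the obedience requirement for type $i$ is exactly the $j=i$ case of this family, so both are handled at once. Taking each $v_{a,i,j}$ at its tightest feasible value, the empirical constraint reads $\widehat{L}_i - \widetilde{t}^i \ge \widehat{R}_{ij} - \widetilde{t}^j$, where $\widehat{L}_i = \tfrac1K\sum_k\sum_a \widetilde{\pi}^i(a\mid\omega_k)u^i(a,\omega_k)$ and $\widehat{R}_{ij}=\sum_a \max_{a'}\tfrac1K\sum_k \widetilde{\pi}^j(a\mid\omega_k)u^i(a',\omega_k)$. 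The population constraint, which holds by optimality of $\{\pi^{i*},t^{i*}\}$, has the identical form with empirical averages replaced by expectations $L_i$ and $R_{ij}$. Hence the violation is at most $|\widehat{L}_i - L_i| + |\widehat{R}_{ij} - R_{ij}|$, and it suffices to control each difference by $\varepsilon/2$ with high probability.

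The term $\widehat{L}_i$ is the empirical mean of $X_k \coloneq \sum_a \widetilde{\pi}^i(a\mid\omega_k)u^i(a,\omega_k)$, which lies in $[0,1]$ because $\sum_a \widetilde{\pi}^i(a\mid\omega_k)=1$ and $u^i\in[0,1]$, so Hoeffding's inequality controls $|\widehat{L}_i-L_i|$ directly. The key maneuver is for $\widehat{R}_{ij}$: rather than concentrating each of the $m$ summands separately (which would force an error budget of $\varepsilon/m$ per term and inflate $K$ by a factor of $m$), I rewrite the sum of maxima as a single maximum over response functions,
\[
\widehat{R}_{ij} \;=\; \max_{\sigma:\mathcal{A}\to\mathcal{A}}\; \tfrac1K\sum_{k=1}^K Y_k^\sigma, \qquad Y_k^\sigma \coloneq \sum_{a}\widetilde{\pi}^j(a\mid\omega_k)\,u^i(\sigma(a),\omega_k),
\]
which is valid because the maximizing $a'$ may be chosen independently for each $a$. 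Since $\sum_a \widetilde{\pi}^j(a\mid\omega_k)=1$, every $Y_k^\sigma\in[0,1]$, so for each fixed $\sigma$ Hoeffding gives $|\tfrac1K\sum_k Y_k^\sigma - \mathbb{E}Y_1^\sigma|\le \varepsilon/2$ with probability at least $1-2e^{-K\varepsilon^2/2}$; a union bound over the $m^m$ functions $\sigma$ transfers this to both maxima (note $R_{ij}=\max_\sigma\mathbb{E}Y_1^\sigma$), yielding $|\widehat{R}_{ij}-R_{ij}|\le\varepsilon/2$. The $\log(m^m)=m\log m$ cost of this union bound is exactly what produces the linear-in-$m$ factor. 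The IR constraints are handled analogously but more simply, requiring only the concentration of $\widehat{L}_i$ (and, if the baseline $u^i$ is itself taken empirically, a cheap max over the $m$ actions).

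Finally, a union bound over the $O(n^2)$ IC constraints (each invoking the above $m^m$-fold event) and the $n$ IR constraints forces a total failure probability of at most $O(n^2)\cdot 2\,m^m e^{-K\varepsilon^2/2}$. Setting this to $\delta$ and solving gives $K = O\!\big((m\log m + \log(n/\delta))/\varepsilon^2\big) = O\!\big((m\log(mn/\delta))/\varepsilon^2\big)$, as claimed. I expect the main obstacle to be precisely the $\widehat{R}_{ij}$ step: the naive term-by-term bound loses a factor of $m$, and the correct accounting rests on observing that the simplex constraint keeps each $Y_k^\sigma$ bounded in $[0,1]$ uniformly in $m$, so that recasting the sum of maxima as a max of sums over $\sigma$ pays only the logarithmic union-bound price $\log(m^m)$.
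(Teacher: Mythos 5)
Your proposal is correct and follows essentially the same route as the paper's proof: restrict an optimal responsive menu to the sampled states, apply Hoeffding's inequality to the per-state sums (bounded in $[0,1]$ by the simplex constraint), and handle the sum-of-maxima term via the identity $\sum_{a}\max_{a'}f(a,a') = \max_{\sigma:\mathcal{A}\to\mathcal{A}}\sum_a f(a,\sigma(a))$, paying only $\log(m^m)=m\log m$ in the union bound over response maps $\sigma$ together with the $O(n^2)$ constraint indices. This matches the paper's argument step for step, including the key accounting that yields the linear-in-$m$ sample complexity.
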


\begin{proof}{Proof of \Cref{lem:concentration}:}
Consider an (explicit) responsive menu $\{\pi^{*i}, t^{*i}\}$ that maximizes revenue. Set $\widetilde{t}^i=t^{*i}$ and $\widetilde{\pi}^i(\cdot \mid \omega) = {\pi}^{*i}(\cdot \mid \omega)$ for all $i$ in $[n]$ and $\omega \in \omega_{1:K}$, where $\omega_{1:K}$ define the LP. In words, $\widetilde{\pi}^i$ copies the signaling scheme implied by $\pi^{*i}$, but only on the states defined in the LP. Fix any $i$ and $j$ in $[n]$ and a map $\sigma:\mathcal{A}\to\mathcal{A}$. Then, by the linearity of expectation, we have
$$\mathbb{E}_{\omega_{1:K}\sim \mu} \left[\frac{1}{K}\sum_{k=1}^K\sum_{a\in \mathcal{A}} u^i(\omega_k,\sigma(a))\widetilde{\pi}^j(a\mid \omega_k)\right] = \mathbb{E}_{\omega \sim \mu}\left[\sum_{a\in \mathcal{A}}u^i(\omega,\sigma(a)) {\pi^{*j}}(a \mid \omega)\right].$$ 
Since $\omega_{1:K}$ are independently sampled, and utilities always lie within $[0,1]$, by Hoeffding's inequality, with probability $1-\delta$, we have 
$$ \left\lvert\frac{1}{K}\sum_{k=1}^K \sum_{a\in \mathcal{A}} u^i(\omega_k,\sigma(a))\widetilde{\pi}^j(a\mid \omega_k) -  \mathbb{E}_{\omega \sim \mu}\left[\sum_{a\in \mathcal{A}}u^i(\omega,\sigma(a)) {\pi^{*j}}(a \mid \omega)\right] \right\rvert \leq \sqrt{\frac{\log 2\delta^{-1}}{2K}}.$$
Taking a union, with probability $1-\delta$, we can guarantee that the following inequality holds for all $i$ and $j$ in $[n]$ and all $m^m$ mappings from $\mathcal{A}$ to $\mathcal{A}$ simultaneously.
\begin{align}\label{eq:varep}
	\left\lvert\frac{1}{K}\sum_{k=1}^K \sum_{a\in \mathcal{A}} u^i(\omega_k,\sigma(a))\widetilde{\pi}^j(a\mid \omega_k) -  \mathbb{E}_{\omega \sim \mu}\left[\sum_{a\in \mathcal{A}}u^i(\omega,\sigma(a)) {\pi^{*j}}(a \mid \omega)\right] \right\rvert \leq \underbrace{\sqrt{\frac{m\log 2mn\delta^{-1}}{K}}}_{= \varepsilon/2}
\end{align}
The value of $K$ is chosen so that this upper bound is at most $\varepsilon/2$.

Since $\{\pi^*, t^*\}$ is a responsive menu, we also know for all $i$ in $n$ that 
\begin{align}
\mathbb{E}_{\omega \sim \mu}\left[\sum_{a\in \mathcal{A}}u^i(\omega,a) {\pi^{*i}}(a \mid \omega)\right] - t^{*i} &\geq \max_j\left\lbrace \sum_{a\in \mathcal{A}} \max_{a'\in \mathcal{A}}\mathbb{E}_{\omega \sim \mu}\left[u^i(\omega,a) {\pi^{*j}}(a' \mid \omega)\right] - t^{*j} \right\rbrace \nonumber\\
&= \max_j\left\lbrace \max_{\sigma: \mathcal{A} \to\mathcal{A}} \mathbb{E}_{\omega \sim \mu}\left[\sum_{a\in \mathcal{A}}u^i(\omega,\sigma(a)) {\pi^{*j}}(a \mid \omega)\right] - t^{*j} \right\rbrace,\label{eq:icforstar}\\
\mathbb{E}_{\omega \sim \mu}\left[\sum_{a\in \mathcal{A}}u^i(\omega,a) {\pi^{*i}}(a \mid \omega)\right] - t^{*i} &\geq u^i,\nonumber
\end{align}
where the last equality uses the fact that for any bivariate function $f:\mathcal{A}\times\mathcal{A}\to\mathbb{R}$, we have $\sum_{a\in A}\max_{a'\in \mathcal{A}} f(a, a') = \max_{\sigma:\mathcal{A}\to\mathcal{A}} \sum_{a\in A}f(a,\sigma(a))$.

Using this observation, we can guarantee that the IC constraint is violated by at most $\varepsilon$. To see this, we first define
$$ v_{a,i,j}  =  \max_{a'\in \mathcal{A}} \frac{1}{K}\sum_{k=1}^K \widetilde{\pi}^j(a|\omega_k)u^i(a',\omega_k),$$
and then by conditioning on the event that \Cref{eq:varep} holds, observe that
\begin{align*}
	&\sum_{a \in \mathcal{A}} \dfrac{1}{K} \sum_{k = 1}^K \pi^i(a \mid \omega_k)u^i(a, \omega_k) - t^i \\
	&\geq_{(i)} \sum_{a\in \mathcal{A}}\mathbb{E}_{\omega\sim \mu}\left[u^i(\omega, a)\pi^{*i}(a\mid \omega)\right] - t^i - \frac{\varepsilon}{2}\\
	&\geq_{(ii)} \max_j\left\lbrace \max_{\sigma: \mathcal{A} \to\mathcal{A}} \mathbb{E}_{\omega \sim \mu}\left[\sum_{a\in \mathcal{A}}u^i(\omega,\sigma(a)) {\pi^{*j}}(a \mid \omega)\right] - t^{*j} \right\rbrace-\frac{\varepsilon}{2}\\
	&\geq_{(i)} \max_j\left\lbrace \max_{\sigma: \mathcal{A} \to\mathcal{A}} \sum_{a\in \mathcal{A}}\frac{1}{K}\sum_{k=1}^Ku^i(\omega_k,\sigma(a)) {\pi^{*j}}(a \mid \omega_k) - t^{*j} \right\rbrace-{\varepsilon}\\
	&= \max_j\left\lbrace \sum_{a\in \mathcal{A}}\max_{a'\in \mathcal{A}}\frac{1}{K}\sum_{k=1}^Ku^i(\omega_k,a') {\pi^{*j}}(a \mid \omega_k) - t^{*j} \right\rbrace-{\varepsilon}\\
	&=  \max_{j}\left\lbrace\sum_{a \in \mathcal{A}} v_{a,i,j} - t^j\right\rbrace -\varepsilon,
\end{align*}
where (i) and (ii) follow from \Cref{eq:varep} and \Cref{eq:icforstar}, respectively.

The proof of approximate satisfaction of the IR constraints follows similarly, except that we additionally condition on the inequality below: for all $i$ in $[n]$, we have
\begin{align*}
	\left\lvert \max_{a\in \mathcal{A}}\frac{1}{K}\sum_{k=1}^K u^i(\omega_k, a) -  u^i \right\rvert \leq \sqrt{\frac{\log 2n\delta^{-1}}{2K}},
\end{align*}
which by an application of Hoeffding's inequality and a union bound on $i$ in $[n]$ holds with probability $1-\delta$ or more. This completes the proof of \Cref{lem:concentration}.
\end{proof}

Continuing with the proof of \Cref{lem:revenueloss}, from \Cref{lem:concentration}, we know there exists a menu $\{\widetilde{\pi}^i, \widetilde{t}^i\}$ such that $\sum_{i=1}^n f_i \widetilde{t}^i = R^*$ and     
$$  \widehat{U}(\widetilde{\pi}^i, i) - \widetilde{t}^i \geq \widehat{V}(\widetilde{\pi}^j, i) - t^j - \varepsilon, \quad \text{and} \quad \widehat{U}(\widetilde{\pi}^i, i)- \widetilde{t}^i\ge u^i-\varepsilon, \quad \forall i,j \in [n], $$
where $\widehat{V}(\widetilde{\pi}^j, i) = \sum_{a\in \mathcal{A}}\max_{a'\in \mathcal{A}}\mathbb{E}_{\omega\sim \widehat{\mu}}\left[ \widetilde{\pi}^i(a \mid \omega) u^i(a',\omega)\right]$ captures the value of the experiment $\widetilde{\pi}^j$ to a buyer of type $i$ in a world where the state is drawn from the measure $\widehat{\mu}(\omega) = \frac{1}{K}\sum_{k=1}^K \mathbf{1}_{\omega=\omega_k}$, and $\widehat{U}(\widetilde{\pi}^i,i) = \mathbb{E}_{\omega\sim \widehat{\mu}}\left[\sum_{a\in \mathcal{A}} \widetilde{\pi}^i(a \mid \omega) u^i(a,\omega)\right]$ captures the utility a buyer of type $i$ gains from blindly following the recommendations of $\widetilde{\pi}^i$. Note that $\widehat{V}(\widetilde{\pi}^i,i) \geq \widehat{U}(\widetilde{\pi}^i, i)$ by definition.

Thus, we can divide the constraint violations associated with $\{\widetilde{\pi}^i, \widetilde{t}^i\}$ into two buckets: violations of the IC/IR constraints, and those of the obedience constraints. We begin by repairing the former. We offer two recipes, both giving incomparable results.

Without loss of generality, we may assume that $(\widetilde{t}^i)_{i=1}^n$ forms a nondecreasing sequence. To construct a truly feasible solution, we begin by modifying the prices to $t^i=\widetilde{t}^i-i\varepsilon$ for all $i$, thus higher indices with larger $\widetilde{t}^i$'s receive a higher discount. For ensuring participation, observe
\begin{align*}
    \widehat{V}(\widetilde{\pi}^i,i) - t^i &\geq \widehat{U}(\widetilde{\pi}^i,i) - t^i\\
    &= \widehat{U}(\widetilde{\pi}^i,i) - \widetilde{t}^i + i\varepsilon\\
    &\geq u^i - \varepsilon +i\varepsilon\\
    &\geq u^i.
\end{align*}
Although the IC constraints are not yet satisfied, we claim that with this menu, each type $i$ cannot prefer an experiment with an index lower than its own, since for all $j<i$, we have
\begin{align*}
    \widehat{V}(\widetilde{\pi}^i,i) - \widehat{V}(\widetilde{\pi}^j,i) &\geq \widehat{U}(\widetilde{\pi}^i,i) - \widehat{V}(\widetilde{\pi}^j,i)\\
    &\geq \widetilde{t}^i - \widetilde{t}^j - \varepsilon \\
    &= t^i - t^j + \varepsilon(i-j-1)\\
    &\geq t^i-t^j.
\end{align*}
Therefore, with this menu, which is possibly not incentive compatible, the seller extracts a price of $\widetilde{t}^j-j\varepsilon$, from the type $i$, for some $i\leq j\leq n$, and this in turn is at least  $\widetilde{t}^i - n\varepsilon$, since $\widetilde{t}^j$ is nondecreasing. Thus, the seller earns a net revenue that is at least $\sum_{i=1}^n f_i\widetilde{t}^i - n\varepsilon = R^*-n\varepsilon$. Finally, by relabeling the experiment that the type $i$ actually chooses in this world, where $\widehat{\mu}$ is the common prior, as $\pi^i$ and by labeling the associated price as $t^i$ we can transform this into an incentive-compatible menu with the same revenue.

A similar discounting scheme uses multiplicative discounts and guarantees a net revenue of $R^*-4\sqrt{\varepsilon}$. This uses Lemma 9 in \cite{cai2021sell}, but also see \cite{cai2021efficient,daskalakis2012symmetries} which sets $t^i = (1-\sqrt{\varepsilon})\widetilde{t}^i - \varepsilon$. For completeness, we next produce a slightly modified approach to the same effect. Recall that all prices are between $0$ and $1$. For any $\eta>0$, the interval $[0,1]$ can be partitioned into successive contiguous subintervals of length $\eta$, the first subinterval being $[0,\eta]$, the second being $[\eta, 2\eta]$, and so on, resulting in $\lceil 1/\eta\rceil$ many subintervals. Consider bucketing all $\widetilde{t}^i$'s in these subintervals based on their values. If $\widetilde{t}^i$ falls into the bucket $k$, we set $t^i=\widetilde{t}^{i}-k\varepsilon$. Now, once more, we claim that with this menu, each type $i$ cannot prefer an experiment with an original price $\widetilde{t}^{j}$ if $\widetilde{t}^{j}$'s bucket has a lower index than that assigned to $\widetilde{t}^i$. This is once again because any price in a lower bucket receives $\varepsilon$ less discount, and therefore is forbidden due to a slack of at most $\varepsilon$ in the original IC constraints. Once again, we can guarantee that the type $i$ pays at least $\widetilde{t}^i - \lceil 1/\eta\rceil \varepsilon - \eta$ price. Setting $\eta=\sqrt{\varepsilon}$ yields the claim.

 To compile the final lemma, we use the better of the two.
 
 Finally, we are ready to deal with the last obstacle: obedience constraints. We will use the forward implication of the following classical result due to \cite{blackwell1951comparison,blackwell1953equivalent}.
 
 \begin{theorem}[\cite{blackwell1951comparison}]\label{thm:black} Let $E_1=(S,\pi_1)$ and $E_2=(S,\pi_2)$ be two experiments. Then there exists a column stochastic matrix $M$ such that $\pi_2 = M\pi_1$ if and only if, regardless of the underlying utility functions, $E_2$ has less value that $E_1$.
 \end{theorem}
 
 Having repaired IC constraints, let $\widetilde{\pi}^i,\widetilde{t}^i$ be the product-price that a utility maximizing buyer of type $i$ purchases. As of yet, $\widetilde{\pi}^i$ may violate the obedience constraints. If so, there exists a map $T^i:\mathcal{A}\to\mathcal{A}$ that captures the Bayes-optimal action of type $i$ given signal $s$ from $\widetilde{\pi}^i$. Concretely, choose an arbitrary $T^i$ that 
 $$ T^i(s) \in \argmax_{a\in \mathcal{A}} \mathbb{E}_{\omega\sim \mu} \left[ \widetilde{\pi}^i(s\mid \omega)u^i(a, \omega)\right]. $$
 Construct an experiment $\pi^i=T\circ \widetilde{\pi}^i$ that first samples using $\widetilde{\pi}^i$ and then pushes the sampled state forward through $T$. The Bayes-optimal action for the type $i$ given an output $a'$ from the signaling scheme $\pi^i$ lies in
 $$ \argmax_{a\in \mathcal{A}}\left\lbrace\mathbb{E}_{\omega\sim \mu}[ \pi^i(a'\mid \omega) u^i(a,\omega)] = \sum_{s\in \mathcal{A}: T(s)=a'} \mathbb{E}_{\omega\sim \mu}[\widetilde{\pi}^i(s\mid \omega) u^i(a,\omega)]\right\rbrace. $$
 Since, by construction, $a$ is the Bayes-optimal action individually for all signals $s$ generated from $\widetilde{\pi}^i$ that corresponded to $a$ under $T$, it remains a Bayes-optimal action for the signal $a$ under $\pi^i$ also. This follows from the elementary observation that if a point minimizes a set of functions individually, it minimizes their sum, too. Thus, $\{\pi^i, \widetilde{t}^i\}$ is a responsive menu.
 
 A final check is to make sure that this post-processing of experiments does not break already established IC constraints. To this end, note that the joint distributions on the state and the Bayes-optimal action are identical under $\widetilde{\pi}^i$ and $\pi^i$, thus both experiments are equally valuable for the type $i$. However, by \Cref{thm:black}, since $\pi^i$ is a {\em garbling} (or {\em coarsening}) of $\widetilde{\pi}^i$, for any generic type $j$, the value of $\pi^i$ is at most that of $\widetilde{\pi}^i$. Thus, the IC and IR constraints in the LP are still satisfied.

 \vspace{1em}
 \subsection{Proof of \Cref{prop:lb}: Information-theoretic Limits}\label{sec:lb}
 In fact, this lower bound holds even if there is a single buyer type ($n=1$). Fix any $\varepsilon\in (0,0.1)$ and $\delta \in (0, 1)$. Consider the following setting with binary states and binary actions, that is, $\Omega=\mathcal{A}=\{0,1\}$. Let the utility function be $u(a,\omega)=\mathbf{1}_{a=\omega}$. Consider two prior distributions: $\mu_0 = \textrm{Bernoulli}(1/2)$, and $\mu_1=\textrm{Bernoulli}(1/2+c\varepsilon)$, for some $c>1$. 
 
 Without loss of generality, a revenue maximizing menu for either setting has a single experiment that completely reveals the state. This is because switching any experiment in such a menu with a fully revealing experiment does not hinder participation, and without multiple types, incentive compatibility constraints are entirely absent. The utility of the buyer given the outcome of a fully revealing experiment in both cases is $1$, while the baseline utilities are $1/2$ for $\mu_0$ and $1/2+c\varepsilon$ for $\mu_1$. Thus, the optimal revenues are $1/2$ for $\mu_0$ and $1/2-c\varepsilon$ for $\mu_1$.

 If an algorithm $\mathcal{A}$ that can solely access the underlying prior via samples guarantees a revenue within $\varepsilon$ of the optimal with probability $1-\delta$, we claim that it can be used to distinguish between the cases where the prior is $\mu_0$ and where it is $\mu_1$ with the same probability. If the price the algorithm charges is at least $1/2-\varepsilon$, we declare that the underlying prior is $\mu_0$, otherwise we declare it to be $\mu_1$. Now, we can utilize the following classical information-theoretic fact to conclude the proof, which follows by composing Le Cam's bound and the Bretagnolle-Huber inequality.

 \begin{lemma}[See e.g. \cite{polyanskiy2025information,canonne2022short}]
The number of independent samples needed to distinguish between $\textrm{Bernoulli}(1/2)$ and $\textrm{Bernoulli}(1/2+\varepsilon)$ with probability $1-\delta$ is at least $\Omega(\log (1/\delta)/\varepsilon^2)$.
 \end{lemma}
\section{Pricing High-dimensional Gaussian Data} \label{sec:gaussian}
We now turn to a special case that models the pricing and design of data products for high-dimensional data. The restriction we impose in the section allows us to construct explicit signaling schemes and derive a number of structural results that capture the nature of data markets today. 

We consider a $d$-dimensional state space $\Omega=\mathbb{R}^d$, where the state $\omega$ is drawn from a Gaussian distribution $\mu=\mathcal{N}(0,I)$. These specific choices of the mean vector and the covariance matrix do not impede on generality, because any Gaussian distribution can be transformed into a standard one via an affine map, while precomposing the loss function with the inverse of the said map. The assumption of a Gaussian prior is a strong restriction compared to the algorithmic results in \Cref{sec:sampleLP}, although if the state $\omega$ is naturally an aggregate statistic derived from many individual data points, it may be assumed to be almost Gaussian (e.g. due to central limit theorems or the asymptotic normality of MLE). The utility of the assumption is based on the simplification of the signaling scheme and the specificity of the structural results that it allows. 

The buyer here is interested in a certain aspect of this high-dimensional state, represented by $\theta_i\in \mathbb{R}^d$ for the type $i$. This vector $\theta_i$ determines the utility function of type $i$ over a real-valued action space as $u^i(\omega, a) = -(\theta_i^\top \omega-a)^2$. For completeness, we note that the utility function here is unbounded, unlike the previous sections. Thus, in this setup, the learner is interested in estimating the projection of a high-dimensional state in a certain privately known direction. The squared loss is indeed commonly found and widely used in statistics, economics, and machine learning (see, for example,~\citep{jones2020nonrivalry, acemoglu2022too, loh2011classification}).

To concretely motivate the setting, consider the case when a data seller has a dataset comprising of a large number of observations, perhaps capturing health habits, travel logs, medical history, income history and financial dealings, about some number of individuals. It is not unreasonable to imagine that given access to such characteristics, a bank could tailor the interest rate for a loan, a tour company could offer more targeted discounts, or that an insurance company could personalize the premium for health insurance. However, it is clear that these entities care about distinct yet overlapping attributes that can best inform their decisions. The bank may rely on income history and past financial status; the tour company may base its decisions on income and the propensity for travel as expressed in travel logs; the insurance company might care about health and income related attributes. This is the diversity of interests captured by $\{\theta_i\}$. If such decisions are automated through machine learning, $\theta$'s can also be interpreted as the coefficients of the linear regression model that such entities use to make business decisions.

\vspace{1em}
\subsection{Scalar Gaussian Experiments Suffice}
First, we observe that the value of any experiment is captured by its (expected) posterior covariance. To establish this, for an arbitrary experiment $E=(S,\pi)$, let $\omega_{s,E} = \mathbb{E}_{\omega\sim \mu|(s,E)}[\omega]$ be the posterior mean and $\textrm{Cov}[\omega|s,E] = \mathbb{E}_{\omega\sim \mu|(s,E)} [(\omega - \omega_{s,E}) (\omega - \omega_{s,E}) ^\top]$ be the posterior covariance upon observing the signal $s$ from experiment $E$, and thus averaging over the signal $s$ itself, let $\textrm{Cov}[\omega|E] = \mathbb{E}_{s\sim \pi\circ \mu} [\textrm{Cov}[\omega|s, E]]$ be the expected posterior covariance associated with $E$.

\begin{proposition}\label{prop:gaussval}
For any prior $\mu$ and utility functions $u^i(\omega, a) = -(\theta_i^\top \omega-a)^2$, for any experiment $E$, we have $V(E, i) = -\theta_i^\top \textrm{Cov}[\omega|E] \theta_i$.
\end{proposition}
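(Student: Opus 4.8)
The plan is to compute the value $V(E,i) = \mathbb{E}_{s \sim \pi \circ \mu}[u^i(s,E)]$ directly, exploiting the fact that for type $i$ the utility is a negative squared loss and the optimal action given a signal is the posterior mean of the relevant scalar projection. First I would fix a signal $s$ and analyze the conditional problem. Since $u^i(\omega,a) = -(\theta_i^\top \omega - a)^2$, the action $a$ enters as a scalar attempting to match the random quantity $\theta_i^\top \omega$. The inner maximization
\[
u^i(s,E) = \max_{a \in \mathbb{R}} \mathbb{E}_{\omega \sim \mu \mid (s,E)}\bigl[-(\theta_i^\top \omega - a)^2\bigr]
\]
is the standard mean-squared-error problem, whose maximizer is the conditional mean $a^\star = \mathbb{E}_{\omega \sim \mu\mid(s,E)}[\theta_i^\top \omega] = \theta_i^\top \omega_{s,E}$, and whose optimal value is the negative conditional variance of the scalar $\theta_i^\top \omega$. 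Concretely, $u^i(s,E) = -\operatorname{Var}[\theta_i^\top \omega \mid s,E] = -\theta_i^\top \operatorname{Cov}[\omega \mid s,E]\,\theta_i$, where the last equality is the elementary identity relating the variance of a linear functional to the covariance matrix.

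Next I would take the expectation over the signal $s \sim \pi \circ \mu$. Using linearity of expectation and pulling the fixed vector $\theta_i$ outside,
\[
V(E,i) = \mathbb{E}_{s \sim \pi \circ \mu}\bigl[-\theta_i^\top \operatorname{Cov}[\omega \mid s,E]\,\theta_i\bigr] = -\theta_i^\top \Bigl(\mathbb{E}_{s \sim \pi \circ \mu}\bigl[\operatorname{Cov}[\omega \mid s,E]\bigr]\Bigr)\theta_i = -\theta_i^\top \operatorname{Cov}[\omega \mid E]\,\theta_i,
\]
which is exactly the claimed expression, invoking the definition $\operatorname{Cov}[\omega \mid E] = \mathbb{E}_{s \sim \pi\circ\mu}[\operatorname{Cov}[\omega \mid s,E]]$ given just before the proposition.

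The one place requiring care — and what I expect to be the main (though mild) obstacle — is justifying that the conditional distribution $\mu \mid (s,E)$ has a well-defined mean and covariance and that the MSE-minimizer formula applies, given that the experiment $E$ is a fully generic signaling map $\pi: \Omega \to \Delta S$ rather than a Gaussian observation. The key observation resolving this is that the entire argument is \emph{distribution-free}: the identity $\max_a \mathbb{E}[-(\,Y - a)^2] = -\operatorname{Var}(Y)$ holds for \emph{any} scalar random variable $Y = \theta_i^\top \omega$ with finite second moment, regardless of whether the posterior is Gaussian, so I do not need Gaussianity of the posterior at all. Finiteness of the second moment is inherited from the Gaussian prior $\mu = \mathcal{N}(0,I)$, since conditioning on an event of positive probability (or, more generally, reweighting by the likelihood $\pi(s\mid\cdot)$, which is bounded in $[0,1]$) cannot increase the second moment beyond a constant multiple; this guarantees the expectations above are finite and the interchange of expectation and the finite-dimensional quadratic form is valid. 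Thus the Gaussian assumption in the proposition statement is used only to ensure integrability, and the structural content is purely the squared-loss form of the utility.
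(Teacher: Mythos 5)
Your proof is correct and follows essentially the same route as the paper's: identify the optimal action as the posterior mean $\theta_i^\top \omega_{s,E}$, so that $u^i(s,E) = -\theta_i^\top \mathrm{Cov}[\omega\mid s,E]\,\theta_i$, then average over $s \sim \pi\circ\mu$ to obtain $-\theta_i^\top \mathrm{Cov}[\omega\mid E]\,\theta_i$. Your added remarks on integrability and the distribution-free nature of the MSE identity are correct elaborations of points the paper leaves implicit, not a different argument.
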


\begin{proof}{Proof of \Cref{prop:gaussval}:}
Since $\mathbb{E}_{\omega\sim \mu|(s,E)}[\theta_i^\top \omega]=\theta_i^\top \omega_{s,E}$, we have
\begin{align*}
	u^i(s, E) &= - \min_{a\in\mathcal{A}}\mathbb{E}_{\omega\sim \mu|(s,E)}[(\theta_i^\top \omega - a)^2] \\
	&= -\mathbb{E}_{\omega\sim \mu|(s,E)}[(\theta_i^\top (\omega-\omega_{s,E}))^2] = -\theta_i^\top \textrm{Cov}[\omega|s,E]\theta_i
\end{align*}
Taking the expectation over $s$ being sampled from $\pi\circ \mu$ concludes the proof.
\end{proof}

The above proposition implies that the expected posterior covariance is a sufficient statistic to characterize the value of any experiment. We use this fact to prove that scalar Gaussian experiments are expressive enough to maximize revenue. The revenue-optimal menus we consider in this section might not be {\em responsive}.

\begin{theorem}\label{thm:gauexp}
	For a Gaussian prior $\mu=\mathcal{N}(0,I)$ and utility functions $u^i(\omega, a) = -(\theta_i^\top \omega-a)^2$, there exists a revenue maximizing menu whose each constituent experiment $E$ generates signals as $\mathcal{N}(v_E^\top \omega, \sigma_E^2)$ for some $v_E\in \mathbb{R}^d$ and some $\sigma_E^2 \in \mathbb{R}_{\geq 0}$. In addition, the same parameter choices also satisfy $V(E,i)=\frac{(v_E^\top \theta_i)^2}{\|v_E\|^2+\sigma_E^2} - \|\theta_i\|^2$.
\end{theorem}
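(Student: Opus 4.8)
The plan is to reduce the entire design problem to one over expected posterior covariance matrices, and then show that each matrix in an optimal solution can be replaced by the rank-one matrix induced by a scalar Gaussian experiment. First I would invoke Proposition~\ref{prop:gaussval}, which says that every type's value for every experiment $E$ depends on $E$ only through $\Sigma_E := \textrm{Cov}[\omega \mid E]$, via $V(E,i) = -\theta_i^\top \Sigma_E \theta_i$. By the law of total covariance, $\Sigma_E = I - \textrm{Cov}[\mathbb{E}[\omega\mid s,E]]$, so any achievable $\Sigma_E$ satisfies $0 \preceq \Sigma_E \preceq I$. Writing $M_E := I - \Sigma_E \succeq 0$ (with $M_E \preceq I$) and $g^E_i := \theta_i^\top M_E \theta_i$, the value becomes $V(E,i) = g^E_i - \|\theta_i\|^2$, so the baseline $u^i = -\|\theta_i\|^2$ cancels out of every IC/IR constraint, leaving constraints that read $g^i_i - t^i \ge g^j_i - t^j$ (IC) and $g^i_i - t^i \ge 0$ (IR). The key structural observation is that experiment $j$'s matrix $M^j$ enters only through the ``own'' gain $g^j_j$ (which we want large) and the ``cross'' gains $g^j_i$, $i\ne j$ (which we want small, being the temptation for type $i$ to deviate to $j$).

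Second, for the scalar Gaussian experiment $\mathcal{N}(v^\top\omega,\sigma^2)$ I would record the standard Gaussian conditioning identity: the posterior covariance is constant in the realized signal and equals $I - vv^\top/(\|v\|^2+\sigma^2)$, so $M_E = vv^\top/(\|v\|^2+\sigma^2) = uu^\top$ with $u := v/\sqrt{\|v\|^2+\sigma^2}$ and $\|u\|\le 1$; plugging into Proposition~\ref{prop:gaussval} gives exactly the claimed formula $V(E,i) = (v^\top\theta_i)^2/(\|v\|^2+\sigma^2) - \|\theta_i\|^2$. Thus scalar Gaussian experiments realize precisely the rank-one matrices $\{uu^\top : \|u\|\le 1\}$, and conversely any such $u$ is realized by $v=u$, $\sigma^2 = 1-\|u\|^2$.

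The heart of the argument is the replacement step. Starting from a revenue-maximizing menu with matrices $M^1,\dots,M^n$ and prices $t^1,\dots,t^n$, I would replace each $M^j$ by $\tilde M^j := u_j u_j^\top$ where $u_j := M^j\theta_j / \|(M^j)^{1/2}\theta_j\|$ (taking $u_j=0$ if $(M^j)^{1/2}\theta_j = 0$), keeping prices fixed. Writing $N := (M^j)^{1/2}$ and $y := N\theta_j/\|N\theta_j\|$ so that $u_j = Ny$ and $\|y\|=1$, two facts fall out: (i) $u_j^\top\theta_j = y^\top N\theta_j = \|N\theta_j\|$, hence $\theta_j^\top \tilde M^j \theta_j = \|N\theta_j\|^2 = g^j_j$ is preserved exactly; and (ii) for $i\ne j$, Cauchy--Schwarz gives $(u_j^\top\theta_i)^2 = (y^\top N\theta_i)^2 \le \|y\|^2\|N\theta_i\|^2 = g^j_i$, so cross gains only shrink. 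One also checks $\|u_j\| = \|N(N\theta_j)\|/\|N\theta_j\| \le \|N\|_{\mathrm{op}} \le 1$ using $M^j \preceq I$, so $\tilde M^j$ is scalar-Gaussian-realizable. Because own gains are preserved and cross gains weakly decrease while prices are unchanged, every IC and IR constraint is preserved and revenue is unchanged, yielding an optimal menu of scalar Gaussian experiments.

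The step I expect to be the main obstacle is guessing and then verifying the correct projection direction $u_j$: as the theorem statement anticipates, it is generally \emph{not} $\theta_j$ itself, and indeed naively revealing $\theta_j^\top\omega$ can strictly increase a nearly-parallel type's value. The clean choice $u_j \propto M^j\theta_j$ (equivalently $u_j = (M^j)^{1/2}y$ with $y$ the unit vector along $(M^j)^{1/2}\theta_j$) is exactly what makes the own-gain constraint tight while a single application of Cauchy--Schwarz controls all cross gains simultaneously; finding this is the crux. Secondary care is needed to (a) confirm the optimal menu exists, using that the feasible region is compact once one notes $0\preceq\Sigma^j\preceq I$ and $t^j \le g^j_j \le \|\theta_j\|^2$, and (b) carry out all $n$ replacements at once, checking that cross gains appear only on the disadvantageous side of the IC inequalities so that the simultaneous substitution remains feasible.
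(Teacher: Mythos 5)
Your proposal is correct and follows essentially the same route as the paper: both reduce to expected posterior covariances via Proposition~\ref{prop:gaussval} and the law of total covariance, replace each experiment's matrix $M^j$ by the rank-one surrogate in the direction $M^j\theta_j$ (your normalized $u_j = M^j\theta_j/\|(M^j)^{1/2}\theta_j\|$ is exactly the paper's choice $v_i = M_i\theta_i$, $\sigma_i^2 = \theta_i^\top(M_i-M_i^2)\theta_i$ after rescaling), and verify that own values are preserved while cross values weakly decrease. Your square-root factorization $(y^\top N\theta_i)^2 \le \|y\|^2\|N\theta_i\|^2$ is just the standard proof of the generalized Cauchy--Schwarz inequality the paper invokes, so the two arguments coincide step for step.
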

\begin{proof}{Proof of \Cref{thm:gauexp}:}
	Let $\{E^i, t^i\}$ be a revenue-maximizing menu, which we generically assume to implement an incentive-compatible direct mechanism. For each experiment $E^i$ we will construct $v_i$ and $\sigma_i^2$ so that the scalar Gaussian experiment -- let us call it $G^i$ -- representing $\mathcal{N}(v_i^\top \omega, \sigma_i^2)$ satisfies $V(E^i, i) = V(G^i, i)$ and $V(E^i, j) \geq V(G^i,j)$ for all $i$ and $j$ in $[n]$. In words, for each experiment $E^i$, the corresponding Gaussian experiment $G^i$ is, in general, (weakly) less valuable, except that it retains its value for the intended receipt, a buyer of type $i$. Thus, $\{G^i,t^i\}$ also satisfies IC and IR constraints and is therefore a direct menu that generates the same revenue.
	
\begin{lemma}[see e.g.~\cite{box2011bayesian}]
\label{thm:multivariate}
Consider a Gaussian random vector
$X = \begin{bmatrix}
X_1 \\
X_2 \end{bmatrix}$ with mean 
$\mu = \begin{bmatrix}
\mu_1 \\
\mu_2 \end{bmatrix}$ and covariance
$\Sigma = \begin{bmatrix}
\Sigma_{11} & \Sigma_{12} \\
\Sigma_{21} & \Sigma_{22} \end{bmatrix}$, then $$(X_1|X_2 = a) \sim \mathcal{N}(\mu_1 + \Sigma_{12}\Sigma_{22}^{-1}(a-\mu_2), \Sigma_{11} - \Sigma_{12}\Sigma_{22}^{-1}\Sigma_{21}).$$
\end{lemma}

Using this folklore result, we see that $$\textrm{Cov}[\omega|G^i] = I -  \frac{v_i v_i^\top }{{\|v_i\|^2 + \sigma_i^2}}.$$ For any vector-valued random variables $Y$ and random variable $X$, we know $$\text{Cov}(Y) = \E(\text{Cov}(Y|X)) + \text{Cov}(\E(Y|X)).$$ For notational brevity, define $M_i = I-\textrm{Cov}[\omega | E^i]$, for which we know $I\succeq M_i \succeq 0$ using the previous fact. Set $v_i =  M_i\theta_i$ and $\sigma_i^2 = \theta_i^\top (M_i - M_i^2) \theta_i$, which is valid since $$M_i = M_i^{1/2} I M_i^{1/2} \succeq M_i^{1/2} M M_i^{1/2} = M_i^2.$$
Using \Cref{prop:gaussval}: 
\begin{align*}
	V(G^i, i) &=  -\theta_i^\top \textrm{Cov}[\omega|G^i] \theta_i \\
	&= \frac{(\theta_i^\top M_i\theta_i)^2}{\theta_i^\top M_i^2\theta_i + \theta_i^\top (M_i - M_i^2) \theta_i} - \|\theta_i\|^2 \\
	&= \theta_i^\top (M_i-I)\theta_i \\
	&= -\theta_i^\top \textrm{Cov}[\omega|E^i] \theta_i \\&= V(E^i, i). 
\end{align*}
Finally, we invoke \Cref{prop:gaussval} to observe that
\begin{align*}
	V(G^i, j) &=  -\theta_j^\top \textrm{Cov}[\omega|G^i] \theta_j \\
	&= \frac{(\theta_j^\top M_i\theta_i)^2}{\theta_i^\top M_i^2\theta_i + \theta_i^\top (M_i - M_i^2) \theta_i} - \|\theta_j\|^2 \\
	& = \frac{(\theta_j^\top M_i\theta_i)^2 }{\theta_i^\top M_i\theta_i} - \|\theta_j\|^2\\
	& \leq \theta_j^\top M_i\theta_j - \|\theta_j\|^2 \\
	&= \theta_j^\top (M_i-I)\theta_j \\
	&= V(E^i, j),
\end{align*}
where the sole inequality follows from the generalized Cauchy-Schwarz inequality that for any $M\succeq 0$, it holds that $(\theta_i^\top M\theta_i)(\theta_j^\top M\theta_j) \geq (\theta_i^\top M\theta_j)^2$. The addendum can be independently verified by combining the expression for $\textrm{Cov}[w|G^i]$ and \Cref{prop:gaussval}.
\end{proof}

\vspace{1em}
\subsection{Optimal Menu Design via Semidefinite Programming}
Given that there is a revenue maximizing menu composed of scalar Gaussian experiments, as we have proved above, the optimal menu design problem naturally becomes a non-convex quadratically constrained quadratic program (NC-QCQP), with the variables parametrizing the best such experiments. 

The non-convexity here arises due to the left side of IC and IR constraints (\Cref{eq:qcqp}) that involve a lower bound of a positive quadratic term. However, we prove that this NC-QCQP has an exact SDP relaxation and thus can be solved in polynomial time. In general, understanding when an NC-QCQP admits an exact, or even an approximate, SDP relaxation is a challenging area of study in convex optimization \citep{fradkov1979thes,burer2020exact,nesterov1997quality,wang2022tightness}. Our proof, on the other hand, follows entirely via elementary inequalities. The key observation used here is that each positive quadratic term involves a disjoint set of variables, and its every appearance as an upper bound uses the same coefficients.

\begin{proposition}\label{prop:qcqp}
Let $\{v_i,t^i\}$ be an optimal solution to the following (non-convex) QCQP.
	\begin{equation}
    \label{eq:qcqp}
    \begin{array}{ll@{}ll}
    \max\limits_{\{v_i, t^i\}}   & \displaystyle \sum_{i\in [n]} f_i t^i & \\
        \text{subject to}&  (\theta_i^\top v_i)^2 - t^i \geq    (\theta_i^\top v_j)^2 - t^j, &\quad \forall i,j \in [n]\times[n]\\
    &  (\theta_i^\top v_i)^2- t^i\ge 0 , &\quad \forall i\in [n]\\
    &  \|v_i\|^2 \leq 1, &\quad \forall i\in [n]\\
    \end{array}
 \end{equation}
  Then, for the setting involving a Gaussian prior $\mu=\mathcal{N}(0,I)$ and utility functions $u^i(\omega, a) = -(\theta_i^\top \omega-a)^2$, $(E^i, t^i)$ is a revenue-maximizing menu, where $E^i$ represents the experiment $\mathcal{N}(v_i^\top \omega, 1-\|v_i\|^2)$.
\end{proposition}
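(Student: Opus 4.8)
The plan is to show that the QCQP in \eqref{eq:qcqp} is nothing but an exact reformulation of the general revenue-maximization problem \eqref{eq:GeneralOpt}, once we restrict—without loss, by \Cref{thm:gauexp}—to menus composed of scalar Gaussian experiments. The argument has two ingredients: a direct computation of the baseline utility and value expressions in the Gaussian model, and a scale-invariance observation that lets us absorb the noise level into the single constraint $\|v_i\|\le 1$.

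First I would compute the baseline utility. Under $\mu=\mathcal{N}(0,I)$ the optimal no-information action for type $i$ is $a=0$, so $u^i=-\mathbb{E}_{\omega\sim\mu}[(\theta_i^\top\omega)^2]=-\|\theta_i\|^2$. Next, by \Cref{thm:gauexp} there is a revenue-maximizing menu in which each $E^i$ is a scalar Gaussian experiment $\mathcal{N}(v_i^\top\omega,\sigma_i^2)$ with value $V(E^i,i)=(v_i^\top\theta_i)^2/(\|v_i\|^2+\sigma_i^2)-\|\theta_i\|^2$. Substituting these into the IC and IR constraints of \eqref{eq:GeneralOpt}, the additive $-\|\theta_i\|^2$ terms cancel between the two sides of each IC constraint and against $u^i$ in each IR constraint, leaving constraints that depend on an experiment only through the quantity $(v^\top\theta_i)^2/(\|v\|^2+\sigma^2)$.

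The key step is the normalization. The value expression is invariant under rescaling $(v,\sigma^2)\mapsto(cv,c^2\sigma^2)$ for any $c>0$; equivalently, it depends on $(v,\sigma^2)$ only through the normalized direction $\bar v:=v/\sqrt{\|v\|^2+\sigma^2}$, for which $(v^\top\theta_i)^2/(\|v\|^2+\sigma^2)=(\bar v^\top\theta_i)^2$ and $\|\bar v\|^2=\|v\|^2/(\|v\|^2+\sigma^2)\le 1$. Conversely, any $\bar v$ with $\|\bar v\|\le 1$ is realized by the experiment $\mathcal{N}(\bar v^\top\omega,\,1-\|\bar v\|^2)$, since then $\|\bar v\|^2+\sigma^2=1$. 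Hence the set of value profiles achievable by scalar Gaussian experiments is exactly $\{(v^\top\theta_i)^2-\|\theta_i\|^2:\|v\|\le 1\}$, and relabeling $v_i:=\bar v_i$ turns the substituted constraints into precisely the QCQP constraints $(\theta_i^\top v_i)^2-t^i\ge(\theta_i^\top v_j)^2-t^j$, $(\theta_i^\top v_i)^2-t^i\ge 0$, and $\|v_i\|\le 1$. (The degenerate case $v=\sigma^2=0$ arises only for the no-information experiment, which is equally captured by $v_i=0$, $\sigma^2=1$.)

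Finally I would close the equivalence in both directions. Given the optimal scalar-Gaussian menu furnished by \Cref{thm:gauexp}, its normalized directions give a QCQP-feasible point of equal objective value, so the QCQP optimum is at least the optimal revenue. Conversely, given any QCQP optimum $\{v_i,t^i\}$, the experiments $E^i=\mathcal{N}(v_i^\top\omega,\,1-\|v_i\|^2)$ are valid (as $1-\|v_i\|^2\ge 0$) and, by the value formula with $\|v_i\|^2+\sigma_i^2=1$, satisfy the IC and IR constraints of \eqref{eq:GeneralOpt} while extracting revenue equal to the QCQP objective; hence $(E^i,t^i)$ is revenue-maximizing. I do not expect a genuine obstacle: the only point requiring care is verifying that the normalization map is onto the right feasible set—both the forward normalization $\bar v=v/\sqrt{\|v\|^2+\sigma^2}$ and its inverse $\sigma^2=1-\|v\|^2$—so that imposing $\|v_i\|\le 1$ neither discards nor spuriously adds feasible menus; everything else is bookkeeping with the cancellation of the $\|\theta_i\|^2$ terms.
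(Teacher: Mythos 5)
Your proof is correct and follows essentially the same route as the paper's: invoke \Cref{thm:gauexp} to restrict to scalar Gaussian experiments, exploit the scale invariance of the value formula to normalize $\|v\|^2+\sigma^2=1$ (equivalently $\|v\|\le 1$ with $\sigma^2=1-\|v\|^2$), and substitute into \eqref{eq:GeneralOpt} to obtain the QCQP. Your write-up is somewhat more explicit than the paper's—spelling out the baseline utility $u^i=-\|\theta_i\|^2$, both directions of the equivalence, and the degenerate case $v=\sigma^2=0$—but the underlying argument is identical.
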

\begin{proof}{Proof of \Cref{prop:qcqp}:}
First, we note that any non-zero scaling of the outcome (signal) of an experiment does not alter its value because scaling operations do not change the posterior, because scaling is an invertible, and hence, information preserving, operation. Let $\mathcal{N}(v_E^\top \omega, \sigma_E^2)$ be witnesses to the statement of \Cref{thm:gauexp}. Consider $$v'_E = \frac{v_E}{\sqrt{\|v_E\|^2 +\sigma_E^2}}, \quad {\sigma'_E}^2 = \frac{\sigma_E^2}{(\|v_E\|^2 + \sigma_E^2)}.$$
Thus, without loss of generality, we can operate with Gaussian experiments (say, E) $\mathcal{N}({v'_E}^\top \omega, {\sigma'_E}^2)$ where we are guaranteed that $\|{v'_E}\|^2 + {\sigma'_E}^2 = 1$, as long as we limit $\|v'_E\|\leq 1$. Furthermore, for such Gaussian experiments, from the latter part of \Cref{thm:gauexp}, we know $V(E, i) = ({v'_E}^\top \theta_i)^2 + u^i$. Substituting this into \Cref{eq:GeneralOpt}, we arrive at the claimed QCQP. 
\end{proof}

\begin{theorem}\label{thm:sdp}
The following SDP has an optimal rank-one solution. Furthermore, given a solution $\{V_i,t^i\}$ to the SDP, $\{v_i= \frac{V_i\theta_i}{\sqrt{\theta_i^\top V_i \theta_i}}, t^i\}$ forms a solution to the NC-QCQP in \Cref{eq:qcqp}.
	\begin{equation}
    \label{eq:sdp}
    \begin{array}{ll@{}ll}
    \max\limits_{\{V_i, t^i\}}   & \displaystyle \sum_{i\in [n]} f_i t^i & \\
        \text{subject to}&  \langle V_i, \theta_i\theta_i^\top\rangle - t^i \geq    \langle V_j, \theta_i\theta_i^\top\rangle - t^j, &\quad \forall i,j \in [n]\times[n]\\
    &  \langle V_i, \theta_i\theta_i^\top\rangle- t^i\ge 0, &\quad \forall i\in [n]\\
        &  0 \preceq V_i \preceq I,  &\quad \forall i\in [n]\\
    \end{array}
 \end{equation}
\end{theorem}
\begin{proof}{Proof of \Cref{thm:sdp}:}
Let us first establish the second part of the claim. Note that
$$ (v_i^\top \theta_i)^2 = \frac{(\theta_i^\top V_i\theta_i)^2}{\theta_i^\top V_i \theta_i} = \langle V_i, \theta_i\theta_i\rangle$$ and that $$(\theta_j^\top v_i)^2 = \frac{(\theta_j^\top V_i\theta_i)^2}{\theta_i^\top V_i \theta_i} \leq \langle V_i, \theta_j\theta_j^\top\rangle $$ using the generalized Cauchy-Schwarz inequality. Moreover $$\|v_i\|^2 = \frac{\theta_i^\top V_i^2 \theta_i}{\theta_i^\top V_i \theta_i} \leq 1$$
since $0\preceq V_i\preceq I$ implies $0\preceq V_i^2 \preceq V_i$. Thus, we have shown that all the NC-QCQP constraints are satisfied by the proposed solution.

For the first part, we note that for any $\{v_i, t^i\}$ feasible for the NC-QCQP, $\{v_iv_i^\top, t^i\}$ is feasible for the SDP. Furthermore, using the second part of the claim, we know that the optimal solution to the NC-QCQP has the same objective value as the optimal solution to the SDP. 
\end{proof}

\vspace{1em}
\subsection{A Characterization of Full Surplus Extraction}
As a consequence of having private preferences, the types in our setup can extract information rents. We give a necessary and sufficient characterization  of $\theta_i$'s under which the seller can perfectly screen the buyer and extract the entire revenue as if the buyer's type were publicly revealed. Our characterization, {\em well separatedness}, is an intuitive condition capturing that the preference vectors are sufficiently non-overlapping, which in turn avoids incentive compatibility conflicts. In the absence of this information asymmetry, the seller's revenue is $\sum_{i=1}^n f_i \|\theta_i\|^2$. Without loss of generality, we assume henceforth that $f_i > 0$ for all types $i$.

\begin{theorem}\label{thm:surplus}
Consider a setting with a Gaussian prior $\mu=\mathcal{N}(0,I)$ and utility functions $u^i(\omega, a) = -(\theta_i^\top \omega-a)^2$. The seller extracts the full surplus from the buyer, with revenue totaling $\sum_{i=1}^n f_i \|\theta_i\|^2$, if and only if $\{\theta_i\}$ is well separated, concretely that for all $i$ and $j$ in $[n]$, we have $|\theta_i^\top \theta_j| \leq \|\theta_i\|^2$.
\end{theorem}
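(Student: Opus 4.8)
The plan is to reduce the claim to a direct analysis of the QCQP in \Cref{eq:qcqp}, whose optimum equals the maximum achievable menu revenue by \Cref{thm:gauexp} together with \Cref{prop:qcqp} (this is precisely what licenses restricting attention to scalar Gaussian experiments at no loss). First I would record the two baselines. With no information, the optimal action for type $i$ is $a = \mathbb{E}_{\mu}[\theta_i^\top \omega] = 0$, so $u^i = -\|\theta_i\|^2$; hence the full-information revenue (no asymmetry), obtained by selling each type a fully revealing experiment at price $0 - u^i$, is exactly $\sum_i f_i \|\theta_i\|^2$. In the normalized QCQP the net value type $i$ derives from direction $v_i$ above its baseline is $(\theta_i^\top v_i)^2$ subject to $\|v_i\| \leq 1$, so ``full surplus extraction'' is precisely the statement that the QCQP attains objective value $\sum_i f_i \|\theta_i\|^2$.

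Next I would bound the prices. For any QCQP-feasible $\{v_i, t^i\}$, the IR constraint together with Cauchy--Schwarz and $\|v_i\| \leq 1$ gives $t^i \leq (\theta_i^\top v_i)^2 \leq \|v_i\|^2 \|\theta_i\|^2 \leq \|\theta_i\|^2$. Since every $f_i > 0$, the revenue $\sum_i f_i t^i$ equals $\sum_i f_i \|\theta_i\|^2$ if and only if $t^i = \|\theta_i\|^2$ for every $i$. This forces all three inequalities in the chain to be tight, and equality in Cauchy--Schwarz pins down $v_i = \theta_i / \|\theta_i\|$ up to an irrelevant sign (which does not affect any $(\theta_\cdot^\top v_i)^2$). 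Thus full surplus is attainable if and only if this specific candidate $\{v_i = \theta_i/\|\theta_i\|,\, t^i = \|\theta_i\|^2\}$ is QCQP-feasible.

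It then remains to evaluate the IC constraints at this candidate. The IR constraints hold with equality by construction, and the $(i,j)$ IC constraint reduces to $0 \geq (\theta_i^\top \theta_j)^2/\|\theta_j\|^2 - \|\theta_j\|^2$, i.e.\ $|\theta_i^\top \theta_j| \leq \|\theta_j\|^2$. Ranging over all ordered pairs $(i,j)$ yields both $|\theta_i^\top \theta_j| \leq \|\theta_j\|^2$ and $|\theta_i^\top \theta_j| \leq \|\theta_i\|^2$, which is exactly the stated well-separation condition after relabeling. This settles both directions simultaneously: if the condition holds, the candidate is feasible and extracts full surplus ($\Leftarrow$); and if full surplus is extracted, the candidate must be feasible, forcing the condition ($\Rightarrow$).

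The only genuine subtlety -- the step I would write most carefully -- is the uniqueness argument in the second paragraph: I must argue that $t^i = \|\theta_i\|^2$ forces equality throughout the chain $t^i \leq (\theta_i^\top v_i)^2 \leq \|v_i\|^2 \|\theta_i\|^2 \leq \|\theta_i\|^2$, so that $v_i$ is simultaneously collinear with $\theta_i$ and unit-norm, leaving only the sign free. Everything else is a routine substitution into the IC inequalities.
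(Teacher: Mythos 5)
Your proof is correct and follows essentially the same route as the paper's: reduce to the NC-QCQP in \Cref{eq:qcqp} via \Cref{thm:gauexp} and \Cref{prop:qcqp}, observe that full surplus forces $t^i = \|\theta_i\|^2$ and (by Cauchy--Schwarz with $\|v_i\|\leq 1$) $v_i = \pm\theta_i/\|\theta_i\|$, and then read the well-separation condition off the IC constraints, with the converse direction being feasibility of this same candidate. If anything, your unified treatment is slightly more explicit than the paper's at the one delicate step --- the chain $t^i \leq (\theta_i^\top v_i)^2 \leq \|v_i\|^2\|\theta_i\|^2 \leq \|\theta_i\|^2$ that pins down the prices and directions --- which the paper compresses into a single sentence.
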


\begin{proof}{Proof:}
Let us say that $\{\theta_i\}$ is well separated. Then, we claim $v_i=\theta_i/\|\theta_i\|$ and $t_i=\|\theta_i\|^2$ is feasible for the NC-QCQP in \Cref{eq:qcqp}. Note that $(\theta_i^\top v_i)^2-t^i=0$ for all $i$, while $$(\theta_i^\top v_j)^2-t^j = \frac{(\theta_i^\top \theta_j)^2}{\|\theta_j\|^2} - \|\theta_j\|^2 \leq 0.$$This establishes sufficiency.

Conversely, suppose that there exists a menu that extracts the full surplus; such a menu must be optimal. Given \Cref{thm:gauexp} and \Cref{prop:qcqp}, we can generically assume that the experiments in such an optimal menu are of the kind used whose values are linked in \Cref{eq:qcqp}. The IR constraints, together with $\|v_i\|^2 \leq 1$, imply that $v_i=\theta_i/\|\theta_i\|$. Now, for any $i$ and $j$, from the IC constraints, we arrive at $$0 \geq \frac{(\theta_i^\top \theta_j)^2}{\|\theta_j\|^2} - \|\theta_j\|^2,$$
as needed.
\end{proof}

\vspace{1em}
\subsection{Deterministic Experiments Suffice for High-dimensional Data}
Based on previous work on data pricing (for example, \cite{bergemann2018design,cai2021sell}, and indeed the previous section of this work, also), one may form the expectation that randomizing signals is crucial to extract optimal revenue. Although true from a theoretical viewpoint, a puzzling aspect of modern data markets is that most products being sold involve little to no explicit randomization. In our simplified setting, although this is not uniformly true in all dimensions, we prove that there always exist deterministic revenue-maximizing signaling schemes for high-dimensional state spaces. The proof is rooted in the intuition that, in sufficiently high-dimensions, the epistemic uncertainty inherent in the state can substitute for the aleatoric randomness imposed via explicit algorithmic randomization.

\begin{theorem}\label{thm:deter}
For a Gaussian prior $\mu=\mathcal{N}(0,I)$ and utility functions $u^i(\omega, a) = -(\theta_i^\top \omega-a)^2$, if the dimension of the state space is large enough, concretely, if $d \geq n$, there exists a revenue-maximizing menu composed entirely of deterministic experiments.
\end{theorem}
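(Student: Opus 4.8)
The plan is to leverage the structural reductions already in hand and then convert noise into a harmless deterministic projection onto an unused dimension. By \Cref{thm:gauexp} and \Cref{prop:qcqp}, I may assume the revenue-optimal menu consists of scalar Gaussian experiments $E^i = \mathcal{N}(v_i^\top\omega,\sigma_i^2)$, normalized so that $\|v_i\|^2+\sigma_i^2 = 1$; in this normalization \Cref{prop:gaussval} gives the value of $E^i$ to a type $j$ as $V(E^i,j) = (\theta_j^\top v_i)^2 - \|\theta_j\|^2$. The crucial observation is that this value depends on $v_i$ \emph{only} through its inner products with the preference vectors, equivalently only through the orthogonal projection of $v_i$ onto $L := \mathrm{span}\{\theta_1,\dots,\theta_n\}$. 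Consequently, neither the objective nor any IC/IR constraint is affected by modifying $v_i$ in directions orthogonal to $L$, and this freedom is precisely what I would exploit.

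Concretely, let $v_i^{\parallel}$ denote the projection of $v_i$ onto $L$, and note $\|v_i^{\parallel}\|\le\|v_i\|\le 1$. The heart of the argument is to spend the extra ambient dimensions to absorb the leftover noise: since $r := \dim L \le n \le d$, I would pick a unit vector $e \perp L$ and define the deterministic experiment $\widehat E^i$ that reveals $\widehat v_i^{\top}\omega$ exactly, where $\widehat v_i := v_i^{\parallel} + c_i\,e$ and $c_i := \sqrt{1-\|v_i^{\parallel}\|^2}\ge 0$. Then $\|\widehat v_i\|=1$, and because $e\perp\theta_j$ and $v_i-v_i^{\parallel}\perp\theta_j$ for every $j$, we have $\theta_j^{\top}\widehat v_i = \theta_j^{\top}v_i$. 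Applying \Cref{prop:gaussval} to the unit-norm, noiseless direction $\widehat v_i$ (whose posterior covariance is $I-\widehat v_i\widehat v_i^{\top}$) yields $V(\widehat E^i,j) = (\theta_j^{\top}\widehat v_i)^2-\|\theta_j\|^2 = V(E^i,j)$ for all $i,j$. Since every pairwise value is reproduced exactly, the menu $\{\widehat E^i, t^i\}$ satisfies the identical IC and IR constraints and collects the identical revenue $\sum_i f_i t^i = R^*$, while each constituent experiment is now deterministic ($\sigma=0$).

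The one genuine obstacle is the existence of the direction $e$, which requires $L\neq\mathbb{R}^d$, i.e.\ $r<d$. Because $r\le n$, this is automatic whenever $d>n$ or whenever the $\theta_i$ are linearly dependent, so the dimension count $d\ge n$ carries the argument in every case except the knife-edge one where $d=n$ and $\{\theta_i\}$ is a basis, leaving no orthogonal room. For that residual case I would argue directly on the semidefinite program \eqref{eq:sdp}: the extreme points of the spectrahedron $\{0\preceq V\preceq I\}$ are exactly orthogonal projections, and if an optimal solution can be taken with each $V_i$ idempotent then $\|v_i\|^2 = \theta_i^{\top}V_i^2\theta_i/(\theta_i^{\top}V_i\theta_i) = 1$, so the recovered experiment is already deterministic. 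Establishing that this extreme-point/rank reduction is compatible with the coupling IC constraints is the part I expect to require the most care, and it is the natural fallback to make the clean $d\ge n$ statement airtight.
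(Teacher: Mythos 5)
Your main construction is correct as far as it goes, and it shares the paper's basic strategy: start from the scalar Gaussian experiments $\mathcal{N}(v_i^\top\omega,\,1-\|v_i\|^2)$ supplied by \Cref{thm:gauexp} and \Cref{prop:qcqp}, and absorb the residual noise by extending $v_i$ to a unit vector in a direction that the relevant preference vectors cannot detect. The genuine gap is in which vectors must be unable to detect it. Your direction $e$ must be orthogonal to \emph{all} of $\theta_1,\dots,\theta_n$, so it exists only when $\mathrm{span}\{\theta_1,\dots,\theta_n\}\neq\mathbb{R}^d$. That covers $d>n$ and the degenerate (linearly dependent) cases, but not the boundary case $d=n$ with $\{\theta_i\}$ a basis, which is squarely inside the theorem's hypothesis $d\geq n$. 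Your fallback for that case is not a proof: the fact that the extreme points of $\{0\preceq V\preceq I\}$ are orthogonal projections does not help directly, because the IC constraints couple the $V_i$'s, so the feasible region of \eqref{eq:sdp} is not a product of spectrahedra; even fixing all other variables, the set of feasible $V_i$ is the spectrahedron cut by linear inequalities, whose extreme points need not be projections. As written, the statement is unproved exactly at $d=n$.

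The paper closes this gap with a weaker orthogonality requirement, and this is the one idea your argument is missing: the perturbation $\Delta_i$ added to $v_i$ need only be orthogonal to the \emph{other} types' vectors $\{\theta_j: j\neq i\}$ --- that is $n-1$ constraints, so a nonzero $\Delta_i$ exists whenever $d\geq n$ --- with its sign chosen so that $\theta_i^\top v_i\,\theta_i^\top\Delta_i\geq 0$. Setting $v_i'=v_i+\alpha\Delta_i$ with $\alpha\geq 0$ chosen so that $\|v_i'\|=1$ leaves every other type's value for experiment $i$ unchanged, while type $i$'s own value can only weakly increase; this preserves IR and all IC constraints and leaves prices, hence revenue, untouched. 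In other words, you do not need to reproduce all values exactly (which is what forces $e\perp\theta_i$ as well); it suffices that other types' values for experiment $i$ do not rise and type $i$'s value does not fall. Replacing your projection onto $\mathrm{span}\{\theta_1,\dots,\theta_n\}^\perp$ by this one-sided perturbation orthogonal only to $\{\theta_j: j\neq i\}$ makes your argument work for all $d\geq n$ and renders the SDP fallback unnecessary.
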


\begin{proof}{Proof of \Cref{thm:deter}:}
Consider a setting where $d\geq n$ and an optimal solution composed of scalar Gaussian experiments of the form $\mathcal{N}(v_i^\top \omega, 1-\|v_i\|^2)$ where $\{v_i\}$ are derived from an optimal solution to the NC-QCQP in \Cref{eq:qcqp}. 

Either $\|v_i\|=1$ for all $i$, in which case we are done, since the variance of the scalar Gaussian experiments $1-\|v_i\|^2$ is zero. If not, there exists some $i$ with $\|v_i\|<1$. Since $n-1<d$, there must exist a nonzero vector $\Delta_i\in \mathbb{R}^d$ that is perpendicular to $\{\theta_j: j\neq i\}$ and $\theta_i^\top v_i\theta_i^\top \Delta_i\geq 0$; the latter can be ensured by flipping the sign of $\Delta_i$ if necessary. 

Consider setting $v'_i = v_i + \alpha \Delta_i$ where $\alpha$ is nonnegative and set so that $\|v'_i\|=1$, while retaining all other variables unchanged in the NC-QCQP. We claim the solution remains feasible, and in doing so, we haven't altered the transfers. To see this, note that $(\theta_j^\top v'_i) = (\theta_j^\top v_i)^2$ for all $j\neq i$, and hence the other types retain their opinion of $v_i$ (or correctly, $E^i$). However, for type $i$, we get $$(\theta_i^\top v'_i)^2 = (\theta_i^\top v_i)^2 + (\theta_i^\top \Delta_i)^2 + 2 \theta_i^\top v'_i \theta_i^\top \Delta_i \geq (\theta_i^\top v_i)^2,$$
and thus the new experiment $E^i$ is at least as attractive to type $i$ as the one being replaced. This procedure can be repeated until all $\{v_i\}$ have unit norms.
\end{proof}
\section{Experimental Validation}\label{sec:experiments}
To validate the algorithmic developments in the paper, we perform numerical experiments to show that \Cref{alg:algo} achieves a near-optimal revenue given a small number of samples from the prior. 


\begin{figure}[h!]
    \centering
    \includegraphics[scale=0.5]{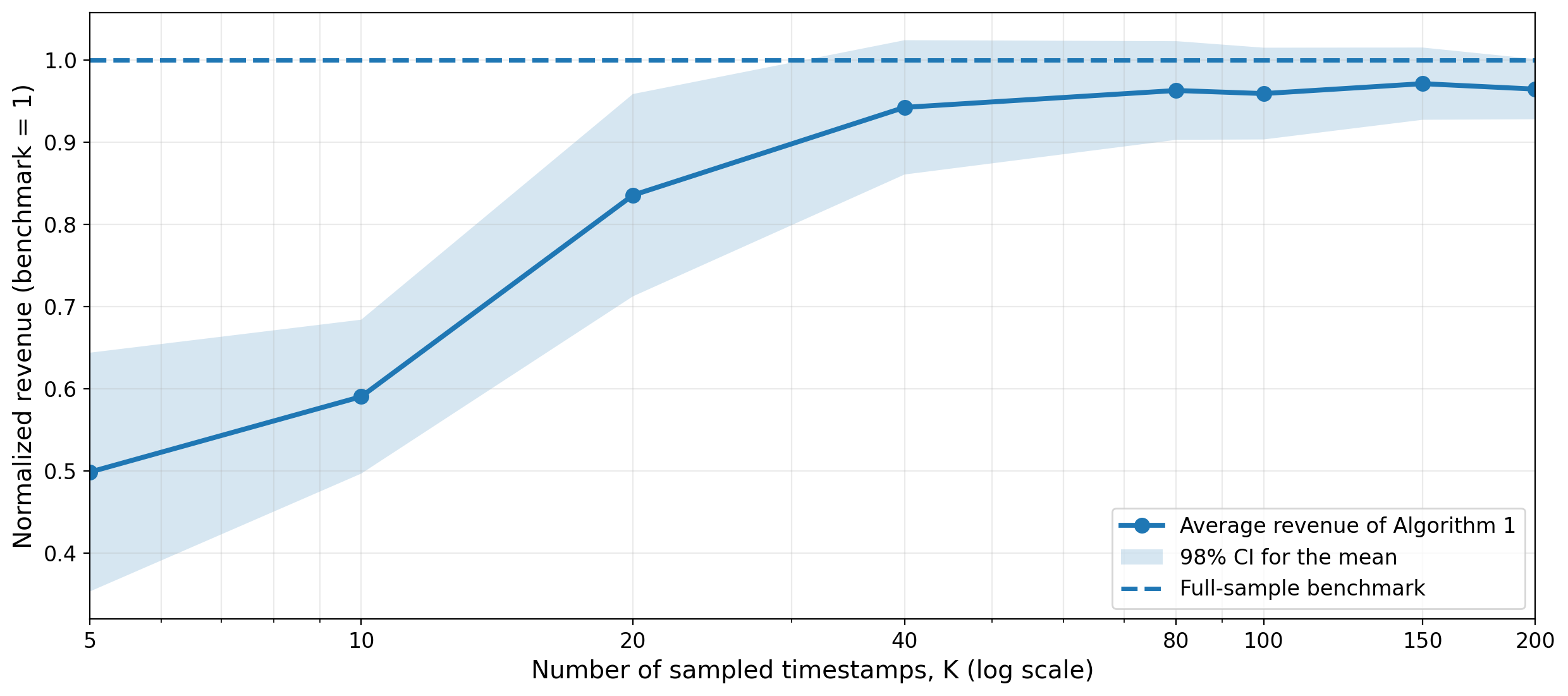}
    \caption{The ratio of revenue of \Cref{alg:algo} to the optimal revenue, plotted as a function of the number of samples made available to \Cref{alg:algo}. The solid line represents the mean, and the band captures a 98\% confidence interval. Note that the x-axis is displayed on a logarithmic scale. The performance of \Cref{alg:algo} quickly approaches the optimal revenue given a small number of samples.}
    \label{fig:1}
\end{figure}
To benchmark \Cref{alg:algo}, we consider a setting where the underlying state of the world encodes travel times between all edges in a connected undirected graph, which is unknown to data buyers. The prior on this state is public knowledge; in our experiments, it is reconstructed from real historical data, as we will discuss soon. Buyers, or more concretely, their types, encode a pair of (nonidentical) vertices, say $(o,d)$, and their action space is the set of paths in this graph that begin at $o$ and end at $d$. The loss for a buyer of type $(o,d)$ taking a path $p$ is simply the time it takes to travel from $o$ to $d$ along $p$. The seller, who has access to real-time travel times (or congestion) between all edges, can therefore selectively divulge this information to buyers via a public menu that charges different prices for different pieces of information.

{\bf Dataset and Experimental Design.} To simulate our prior distribution, we use the public METR-LA traffic dataset \citep{j49q-ch56-25}, which contains $5$-minute aggregated speed measurements from $207$ highway loop detectors in the Los Angeles County highway network. The data span approximately three months (March--June 2012) and provide a time series of speed readings at each sensor. Using distances between measurement points, this can be converted into travel times. During this process, we preprocess the speeds by imputing any remaining missing values using the median, and clip all speeds below $2$ mph to avoid numerically extreme travel times. Our prior in the above setting is a uniform distribution over all these historical measurements. For our experiments, we isolate a small connected subgraph sampled from the largest connected component and use it for all experiments. This fixed subgraph has $|V|=36$ nodes and $|E|=47$ edges. 

Each buyer type is an origin--destination (OD) pair $(o_i,d_i)$, with its action set being paths between $o_i$ and $d_i$. To create a distribution over types, we sample vertex pairs without replacement which on average have long shortest-path travel times; concretely, we sample from a mid-to-long average travel time percentile band (e.g., the 60--90th percentiles). Utilities are negative travel times, affinely rescaled to be between $[0,1]$: for type $(o,d)$, path $p$, and state $\omega$ we set $u_i(\omega,p)=1-\textrm{travel}(p,\omega)/{\tau}$, for a suitably chosen universal constant $\tau$, where $\textrm{travel}(p,\omega)$ is the aggregate travel time on path $p$ arrived at by summing the individual travel times between edges encoded in $\omega$.


{\bf Evaluation Protocol.} For each sample size $k$, we draw $k$ samples from prior, constructed described above, and use these to execute \Cref{alg:algo}. We use Gurobi~\citep{gurobi} to solve the linear program. Confidence bands and means, shown in \Cref{fig:1} are calculated by repeating this experiment $N=50$ times. We normalize the revenue obtained by \Cref{alg:algo} by the optimal revenue. In general, computing the optimal revenue exactly over high-dimensional state spaces is intractable; our state space is 47-dimensional. But, as an exception, specifically because we chose a prior that is an empirical distribution over a finite support, and not a continuous distribution, we can run the full linear program from \cite{cai2021sell} to benchmark our algorithm.

{\bf Results.} In \Cref{fig:1}, where it is worth emphasizing the horizontal axis is on a logarithmic scale, we note that the revenue obtained by \Cref{alg:algo} quickly approaches the optimal, with increasing sample budget. Notably, it attains about 95\% of the optimal revenue given just 40 samples.

\section{Conclusion}
\label{sec:conclusion}

We revisited the information-selling framework of~\cite{bergemann2018design} and its algorithmic study in~\cite{cai2021sell}, noting that the size of the state space grows exponentially when used to model the sale of high-dimensional data, and proposed an algorithm that produces a near-optimal menu whose sample complexity and runtime are independent of the size of the state space. We analyzed the special case of Gaussian data, for which we found a compact description of the space of data products, an efficient algorithm to compute an optimal menu, and an intuitive necessary and sufficient condition for full surplus extraction.

\newpage
\bibliographystyle{plainnat}
\bibliography{allcite}

@article{arora2005pricing,
  title={Pricing diagnostic information},
  author={Arora, Ashish and Fosfuri, Andrea},
  journal={Management Science},
  volume={51},
  number={7},
  pages={1092--1100},
  year={2005},
  publisher={INFORMS}
}

@article{sundararajan2004nonlinear,
  title={Nonlinear pricing of information goods},
  author={Sundararajan, Arun},
  journal={Management science},
  volume={50},
  number={12},
  pages={1660--1673},
  year={2004},
  publisher={INFORMS}
}

@article{neumann2019effective,
  title={How effective is third-party consumer profiling and audience delivery?: Evidence from field studies},
  author={Neumann, Nico and Tucker, Catherine E and Whitfield, Timothy},
  journal={Evidence from Field Studies (May 16, 2019). Forthcoming in Marketing Science-Frontiers},
  year={2019}
}

@article{guo2025selling,
  title={Selling Data to Marketers},
  author={Guo, Liang},
  journal={Management Science},
  year={2025},
  publisher={INFORMS}
}

@article{saa-paper,
author = {Kleywegt, Anton J. and Shapiro, Alexander and Homem-de-Mello, Tito},
title = {The Sample Average Approximation Method for Stochastic Discrete Optimization},
journal = {SIAM Journal on Optimization},
volume = {12},
number = {2},
pages = {479-502},
year = {2002}
}

@techreport{reed1953class,
  title={A class of multiple-error-correcting codes and the decoding scheme},
  author={Reed, Irving S},
  year={1953}
}

@article{haussler1994predicting,
  title={Predicting $\{$0, 1$\}$-functions on randomly drawn points},
  author={Haussler, David and Littlestone, Nick and Warmuth, Manfred K},
  journal={Information and Computation},
  volume={115},
  number={2},
  pages={248--292},
  year={1994},
  publisher={Elsevier}
}

@article{lanckriet2004learning,
  title={Learning the kernel matrix with semidefinite programming},
  author={Lanckriet, Gert RG and Cristianini, Nello and Bartlett, Peter and Ghaoui, Laurent El and Jordan, Michael I},
  journal={Journal of Machine learning research},
  volume={5},
  number={Jan},
  pages={27--72},
  year={2004}
}

@article{kakade2005batch,
  title={From batch to transductive online learning},
  author={Kakade, Sham and Kalai, Adam T},
  journal={Advances in Neural Information Processing Systems},
  volume={18},
  year={2005}
}

@article{bergemann2018design,
  title={The design and price of information},
  author={Bergemann, Dirk and Bonatti, Alessandro and Smolin, Alex},
  journal={American Economic Review},
  volume={108},
  number={1},
  pages={1--48},
  year={2018},
  publisher={American Economic Association 2014 Broadway, Suite 305, Nashville, TN 37203}
}

@inproceedings{cai2021sell,
  title={How to Sell Information Optimally: An Algorithmic Study},
  author={Cai, Yang and Velegkas, Grigoris},
  booktitle={Proceedings of the12th Innovations in Theoretical Computer Science Conference},
  volume={185},
  year={2021}
}

@article{admati1986monopolistic,
  title={A monopolistic market for information},
  author={Admati, Anat R and Pfleiderer, Paul},
  journal={Journal of Economic Theory},
  volume={39},
  number={2},
  pages={400--438},
  year={1986},
  publisher={Elsevier}
}

@inproceedings{babaioff2012optimal,
  title={Optimal mechanisms for selling information},
  author={Babaioff, Moshe and Kleinberg, Robert and Paes Leme, Renato},
  booktitle={Proceedings of the 13th ACM Conference on Electronic Commerce},
  pages={92--109},
  year={2012}
}

@article{myerson1982optimal,
  title={Optimal coordination mechanisms in generalized principal--agent problems},
  author={Myerson, Roger B},
  journal={Journal of mathematical economics},
  volume={10},
  number={1},
  pages={67--81},
  year={1982},
  publisher={Elsevier}
}

@book{box2011bayesian,
  title={Bayesian inference in statistical analysis},
  author={Box, George EP and Tiao, George C},
  year={2011},
  publisher={John Wiley \& Sons}
}

@article{myerson1981optimal,
  title={Optimal auction design},
  author={Myerson, Roger B},
  journal={Mathematics of Operations Research},
  volume={6},
  number={1},
  pages={58--73},
  year={1981},
  publisher={INFORMS}
}

@inproceedings{blackwell1951comparison,
  title={Comparison of experiments},
  author={Blackwell, David},
  booktitle={Proceedings of the Second Berkeley symposium on Mathematical Statistics and Probability},
  volume={2},
  pages={93--103},
  year={1951},
  organization={University of California Press}
}

@article{blackwell1953equivalent,
  title={Equivalent comparisons of experiments},
  author={Blackwell, David},
  journal={The Annals of Mathematical Statistics},
  pages={265--272},
  year={1953},
  publisher={JSTOR}
}

@article{riley1983optimal,
  title={Optimal selling strategies: When to haggle, when to hold firm},
  author={Riley, John and Zeckhauser, Richard},
  journal={The Quarterly Journal of Economics},
  volume={98},
  number={2},
  pages={267--289},
  year={1983},
  publisher={MIT Press}
}

@article{loh2011classification,
  title={Classification and regression trees},
  author={Loh, Wei-Yin},
  journal={Wiley interdisciplinary reviews: data mining and knowledge discovery},
  volume={1},
  number={1},
  pages={14--23},
  year={2011},
  publisher={Wiley Online Library}
}

@article{acemoglu2022too,
  title={Too much data: Prices and inefficiencies in data markets},
  author={Acemoglu, Daron and Makhdoumi, Ali and Malekian, Azarakhsh and Ozdaglar, Asu},
  journal={American Economic Journal: Microeconomics},
  volume={14},
  number={4},
  pages={218--256},
  year={2022},
  publisher={American Economic Association 2014 Broadway, Suite 305, Nashville, TN 37203-2425}
}

@article{jones2020nonrivalry,
  title={Nonrivalry and the Economics of Data},
  author={Jones, Charles I and Tonetti, Christopher},
  journal={American Economic Review},
  volume={110},
  number={9},
  pages={2819--2858},
  year={2020},
  publisher={American Economic Association 2014 Broadway, Suite 305, Nashville, TN 37203}
}

@article{esHo2007optimal,
  title={Optimal information disclosure in auctions and the handicap auction},
  author={Es{\H{o}}, P{\'e}ter and Szentes, Balazs},
  journal={The Review of Economic Studies},
  volume={74},
  number={3},
  pages={705--731},
  year={2007},
  publisher={Wiley-Blackwell}
}

@inproceedings{devanur2020optimal,
  title={Optimal mechanism design for single-minded agents},
  author={Devanur, Nikhil R and Goldner, Kira and Saxena, Raghuvansh R and Schvartzman, Ariel and Weinberg, S Matthew},
  booktitle={Proceedings of the 21st ACM Conference on Economics and Computation},
  pages={193--256},
  year={2020}
}

@inproceedings{giannakopoulos2014duality,
  title={Duality and optimality of auctions for uniform distributions},
  author={Giannakopoulos, Yiannis and Koutsoupias, Elias},
  booktitle={Proceedings of the fifteenth ACM conference on Economics and computation},
  pages={259--276},
  year={2014}
}

@inproceedings{haghpanah2015reverse,
  title={Reverse mechanism design},
  author={Haghpanah, Nima and Hartline, Jason},
  booktitle={Proceedings of the sixteenth ACM conference on Economics and Computation},
  pages={757--758},
  year={2015}
}

@inproceedings{fiat2016fedex,
  title={The fedex problem},
  author={Fiat, Amos and Goldner, Kira and Karlin, Anna R and Koutsoupias, Elias},
  booktitle={Proceedings of the 2016 ACM Conference on Economics and Computation},
  pages={21--22},
  year={2016}
}

@article{kamenica2011bayesian,
  title={Bayesian persuasion},
  author={Kamenica, Emir and Gentzkow, Matthew},
  journal={American Economic Review},
  volume={101},
  number={6},
  pages={2590--2615},
  year={2011},
  publisher={American Economic Association}
}

@article{kamenica2019bayesian,
  title={Bayesian persuasion and information design},
  author={Kamenica, Emir},
  journal={Annual Review of Economics},
  volume={11},
  pages={249--272},
  year={2019},
  publisher={Annual Reviews}
}

@article{alonso2016bayesian,
  title={Bayesian persuasion with heterogeneous priors},
  author={Alonso, Ricardo and C{\^a}mara, Odilon},
  journal={Journal of Economic Theory},
  volume={165},
  pages={672--706},
  year={2016},
  publisher={Elsevier}
}

@inproceedings{dughmi2016algorithmic,
  title={Algorithmic bayesian persuasion},
  author={Dughmi, Shaddin and Xu, Haifeng},
  booktitle={Proceedings of the forty-eighth annual ACM symposium on Theory of Computing},
  pages={412--425},
  year={2016}
}

@article{admati1990direct,
  title={Direct and indirect sale of information},
  author={Admati, Anat R and Pfleiderer, Paul},
  journal={Econometrica: Journal of the Econometric Society},
  pages={901--928},
  year={1990},
  publisher={JSTOR}
}

@article{liu2021optimal,
  title={Optimal pricing of information},
  author={Liu, Shuze and Shen, Weiran and Xu, Haifeng},
  journal={arXiv preprint arXiv:2102.13289},
  year={2021}
}

@inproceedings{chen2022selling,
  title={Selling data to a machine learner: Pricing via costly signaling},
  author={Chen, Junjie and Li, Minming and Xu, Haifeng},
  booktitle={International Conference on Machine Learning},
  pages={3336--3359},
  year={2022},
  organization={PMLR}
}

@inproceedings{cai2021efficient,
  title={An Efficient $\epsilon$-BIC to BIC Transformation and Its Application to Black-Box Reduction in Revenue Maximization},
  author={Cai, Yang and Oikonomou, Argyris and Velegkas, Grigoris and Zhao, Mingfei},
  booktitle={Proceedings of the 2021 acm-siam symposium on discrete algorithms (soda)},
  pages={1337--1356},
  year={2021},
  organization={SIAM}
}

@inproceedings{daskalakis2012symmetries,
  title={Symmetries and optimal multi-dimensional mechanism design},
  author={Daskalakis, Constantinos and Weinberg, Seth Matthew},
  booktitle={Proceedings of the 13th ACM conference on Electronic commerce},
  pages={370--387},
  year={2012}
}

@inproceedings{hartline2011bayesian,
  title={Bayesian incentive compatibility via matchings},
  author={Hartline, Jason D and Kleinberg, Robert and Malekian, Azarakhsh},
  booktitle={Proceedings of the twenty-second annual ACM-SIAM symposium on Discrete Algorithms},
  pages={734--747},
  year={2011},
  organization={SIAM}
}

@inproceedings{cai2013understanding,
  title={Understanding incentives: Mechanism design becomes algorithm design},
  author={Cai, Yang and Daskalakis, Constantinos and Weinberg, S Matthew},
  booktitle={2013 IEEE 54th Annual Symposium on Foundations of Computer Science},
  pages={618--627},
  year={2013},
  organization={IEEE}
}

@phdthesis{weinberg2014algorithms,
  title={Algorithms for strategic agents},
  author={Weinberg, S Matthew},
  year={2014},
  school={Massachusetts Institute of Technology}
}

@article{bergemann2019information,
  title={Information design: A unified perspective},
  author={Bergemann, Dirk and Morris, Stephen},
  journal={Journal of Economic Literature},
  volume={57},
  number={1},
  pages={44--95},
  year={2019},
  publisher={American Economic Association 2014 Broadway, Suite 305, Nashville, TN 37203-2425}
}

@article{bergemann2015selling,
  title={Selling cookies},
  author={Bergemann, Dirk and Bonatti, Alessandro},
  journal={American Economic Journal: Microeconomics},
  volume={7},
  number={3},
  pages={259--294},
  year={2015},
  publisher={American Economic Association 2014 Broadway, Suite 305, Nashville, TN 37203-2425}
}

@article{bergemann2022economics,
  title={The economics of social data},
  author={Bergemann, Dirk and Bonatti, Alessandro and Gan, Tan},
  journal={The RAND Journal of Economics},
  volume={53},
  number={2},
  pages={263--296},
  year={2022},
  publisher={Wiley Online Library}
}

@article{yang2022selling,
  title={Selling consumer data for profit: Optimal market-segmentation design and its consequences},
  author={Yang, Kai Hao},
  journal={American Economic Review},
  volume={112},
  number={4},
  pages={1364--1393},
  year={2022},
  publisher={American Economic Association 2014 Broadway, Suite 305, Nashville, TN 37203}
}

@article{armstrong1996multiproduct,
  title={Multiproduct nonlinear pricing},
  author={Armstrong, Mark},
  journal={Econometrica: Journal of the Econometric Society},
  pages={51--75},
  year={1996},
  publisher={JSTOR}
}

@article{rochet1998ironing,
  title={Ironing, sweeping, and multidimensional screening},
  author={Rochet, Jean-Charles and Chon{\'e}, Philippe},
  journal={Econometrica},
  pages={783--826},
  year={1998},
  publisher={JSTOR}
}

@article{bergemann2016information,
  title={Information design, Bayesian persuasion, and Bayes correlated equilibrium},
  author={Bergemann, Dirk and Morris, Stephen},
  journal={American Economic Review},
  volume={106},
  number={5},
  pages={586--591},
  year={2016},
  publisher={American Economic Association 2014 Broadway, Suite 305, Nashville, TN 37203}
}

@article{canonne2022short,
  title={A short note on an inequality between KL and TV},
  author={Canonne, Cl{\'e}ment L},
  journal={arXiv preprint arXiv:2202.07198},
  year={2022}
}

@book{polyanskiy2025information,
  title={Information theory: From coding to learning},
  author={Polyanskiy, Yury and Wu, Yihong},
  year={2025},
  publisher={Cambridge university press}
}

@article{wang2022tightness,
  title={On the tightness of SDP relaxations of QCQPs},
  author={Wang, Alex L and K{\i}l{\i}n{\c{c}}-Karzan, Fatma},
  journal={Mathematical Programming},
  volume={193},
  number={1},
  pages={33--73},
  year={2022},
  publisher={Springer}
}

@techreport{nesterov1997quality,
  title={Quality of semidefinite relaxation for nonconvex quadratic optimization},
  author={Nesterov, Yurii},
  year={1997},
  institution={Universit{\'e} catholique de Louvain, Center for Operations Research and~…}
}

@article{burer2020exact,
  title={Exact semidefinite formulations for a class of (random and non-random) nonconvex quadratic programs},
  author={Burer, Samuel and Ye, Yinyu},
  journal={Mathematical Programming},
  volume={181},
  number={1},
  pages={1--17},
  year={2020},
  publisher={Springer}
}

@article{fradkov1979thes,
  title={Thes-procedure and duality relations in nonconvex problems of quadratic programming},
  author={Fradkov, AL and Yakubovich, VA},
  journal={Vestn. LGU, Ser. Mat., Mekh., Astron,(1)},
  pages={101--109},
  year={1979}
}

@data{j49q-ch56-25,
doi = {10.21227/j49q-ch56},
url = {https://dx.doi.org/10.21227/j49q-ch56},
author = {Hengyuan He},
publisher = {IEEE Dataport},
title = {California Traffic Network Datasets: METR-LA, PEMS-BAY, PEMS04 and PEMS08 for Traffic Speed and Flow Analysis},
year = {2025} }

@misc{gurobi,
  author = {{Gurobi Optimization, LLC}},
  title = {{Gurobi Optimizer Reference Manual}},
  year = 2024,
  url = "https://www.gurobi.com"
}

\newpage
\appendix
\section{Proof of \Cref{prop:value_of_diff}: Value of Data Product Differentitation}\label{app:value_of_diff}
\vspace{1em}
First, we make an auxiliary claim that $R_\textrm{one}/R_\text{full-info} \geq 1/n$. Full surplus extraction earns a revenue $$R_\textrm{full-info} = \sum_{i=1}^n f_i (\max_{a\in \mathcal{A}} u^i(\omega, a) - u^i).$$ 
In the one-item menu, we propose to reveal the state (without noise) at a price of $$t^* = \max_{a\in \mathcal{A}} u^{i^*}(\omega, a) - u^{i^*}$$ 
where $i^*$ is the type for which $f_i (\max_{a\in \mathcal{A}} u^{i}(\omega, a) - u^{i})$ is maximized. This menu of composed of a single item generates a revenue $r^*$ that is at least $\max_{i\in [n]} f_i (\max_{a\in \mathcal{A}} u^{i}(\omega, a) - u^{i})$, since the type $i^*$ agrees with this transaction. Now
\begin{align*}
r^* &\geq \max_{i\in [n]} f_i (\max_{a\in \mathcal{A}} u^{i}(\omega, a) - u^{i})\\
&\geq \frac{1}{n}\sum_{i=1}^n f_i (\max_{a\in \mathcal{A}} u^{i}(\omega, a) - u^{i}) = \frac{1}{n}R_\textrm{full-info},
\end{align*}
establishing the auxiliary claim.

The first part of the proposition now follows directly from this claim as 
\begin{align*}
	\frac{R_{\text{one}}}{R_{\text{menu}}} &= \frac{R_{\text{one}}}{R_{\text{full-info}} }\frac{R_{\text{full-info}}}{R_{\text{menu}}} \geq \frac{R_{\text{one}}}{R_{\text{full-info}}} \geq \frac{1}{n},\\
	\frac{R_{\text{menu}}}{R_{\text{full-info}}} &= \frac{R_{\text{menu}}}{R_{\text{one}}} \frac{R_{\text{one}}}{R_{\text{full-info}}} \geq \frac{R_{\text{one}}}{R_{\text{full-info}}} \geq \frac{1}{n},
\end{align*}
since $R_\textrm{one}/R_\textrm{menu}$ and $R_\text{menu}/R_\text{full-info}$ are at most one.

For the existence part, we in fact provide a setting that conforms to the specialized Gaussian setting of \Cref{sec:gaussian}: The type $i$ with the preference vector $\theta_i = \alpha_i e_i$ occurs with probability $f_i$, where $e_i$'s are the canonical basis vectors. We will soon specify the values of $\alpha_i$ and $f_i$. Clearly
$$R_\textrm{full-info}=\sum_{i=1}^n f_i \alpha_i^2.$$ 
Similarly, we have that $$R_\textrm{menu}=\sum_{i=1}^n f_i \alpha_i^2,$$ 
which can be verified by plugging $v_i=e_i$ and $t_i = \alpha_i^2$ into the NC-QCQP in \Cref{eq:qcqp}. Now, consider an optimal one-item menu that results in revenue $R_\textrm{one}$. We can generically assume that the product on offer here reveals the state entirely since this choice cannot decrease any type's willingness to pay. Let $t^*$ be the price at which this product is sold. Now $$R_\textrm{one} = \sum_{i=1}^n f_i t^* \mathbf{1}_{\alpha_i^2 \geq t^*}.$$
Set $\alpha_i = \alpha^{-i/2}$ and $f_i = \alpha^i/\sum_{i=1}^n \alpha^i$, for some $\alpha\in (0,1]$. Now, $\sum_{i=1}^n f_i\alpha_i^2 = n / \sum_{i=1}^n \alpha^i$, while by enumerating all values of $t^*$ in $\{\alpha_j^2\}_{j=1}^n$ to determine the maximum, we have
\begin{align*}
	R_\textrm{one} &= \max_{j\in[n]} \sum_{i=1}^n \frac{\alpha^{i}}{\sum_{k=1}^n\alpha^k} \alpha^{-j} \mathbf{1}_{\alpha^{-i}\geq \alpha^{-j}} \\
	&= \frac{\max_{j\in [n]} \sum_{i=j}^n \alpha^{i-j}}{\sum_{i=1}^n \alpha^i} \\
	&< \frac{1}{(1-\alpha)\sum_{i=1}^n \alpha^i},
\end{align*}
where since $\alpha \in (0,1]$, $\alpha^{-i}\geq \alpha^{-j} \iff \alpha^{j} \geq \alpha^i \iff i \geq j$.

Thus $R_\textrm{one}/R_{\textrm{menu}} < 1/(n(1-\alpha))$. Taking a small enough $\alpha$, in particular $\alpha = \varepsilon n / (1+\varepsilon n)$, suffices.


\vspace{1em}
\section{Beyond a Gaussian Prior}
\vspace{1em}
The results in \Cref{sec:gaussian} relied on the assumption of a Gaussian prior, and it is unclear how these, or even qualitatively weaker variants, might be extended to the general case. We give two results that survive such a generalization. 

Below we state these results assuming a generic non-Gaussian prior with identity covariance, but we remark that this is without loss of any generality, because these conditions can be ensured by an affine Whitening transformation. Concretely, if $\mathbb{E}[\omega]=m$ and $\mathbb{E}[(\omega-m)(\omega-m)^\top]=\Sigma$, then $\widetilde{\omega}=\Sigma^{-1/2}(\omega-m)$ can be reaadily verified to be zero mean with identity covariance. The new utility function becomes $\widetilde{u}^i(a,\widetilde{\omega}) = u^i(a, \Sigma^{1/2}\widetilde{\omega}+m)$.

\begin{proposition}\label{lem:sdpup}
For any zero-mean prior $\mu$ with covariance identity and utility functions $u^i(\omega, a) = -(\theta_i^\top \omega-a)^2$, the optimal revenue is upper bounded by the optimal values to the SDP from \Cref{eq:sdp} and the QCQP from \Cref{eq:qcqp}. 
\end{proposition}
\begin{proof}{Proof of \Cref{lem:sdpup}}
In \Cref{thm:sdp}, we have already proved that the optimal values of the SDP and QCQP coincide. So, here, it is sufficient to prove that the optimal revenue is at most the optimal value of the SDP. 

To prove the latter claim, we will produce an explicit feasible solution to the SDP using the experiments constituting the optimal menu which we denote as $\{E^i, t^i\}$. By the revelation principle, without loss of generality, we assume that such a menu implements an incentive compatible direct mechanism. From \Cref{prop:gaussval}, we know that $V(E^i,j)=-\theta_j^\top \textrm{Cov}[\omega | E^i]\theta_j$. Now, we claim that $\{(I-\textrm{Cov}[\omega | E^i], t^i)\}$ is a feasible solution of the SDP. To this end, note that incentive compatibility of the menu implies 
\begin{align*} 
\langle I-\textrm{Cov}[\omega | E^i],\theta_i\theta_i^\top \rangle -t^i &= \|\theta_i\|^2 + V(E^i,i)-t^i \\
&\geq \|\theta_i\|^2 + V(E^j,i)-t^j \\
&= \langle I-\textrm{Cov}[\omega | E^j],\theta_i\theta_i^\top\rangle - t^j.
\end{align*}
Similarly, since $u^i = -\min_{a} \mathbb{E}_{\omega\sim \mu}(\theta_i^\top (\omega-a))^2 = -\|\theta_i\|^2 $, and hence using participation constraints from the optimal menu, we get
$$\langle I-\textrm{Cov}[\omega | E^i],\theta_i\theta_i^\top \rangle -t^i = -u^i+V(E^i,i)-t^i \geq 0.$$
Finally, recall the identity that 
$$ \textrm{Cov}(Y) = \mathbb{E}(\textrm{Cov}(Y|X)) + \textrm{Cov}(\mathbb{E}(Y|X)), $$
which implies that $0\preceq \textrm{Cov}[\omega | E^j] \preceq \mathbb{E}[\omega\omega^\top]=I$, as needed to complete the proof..
\end{proof}

\begin{proposition}\label{lem:onesidefullsurplus}
 For any zero-mean prior $\mu$ with covariance identity and utility functions $u^i(\omega, a) = -(\theta_i^\top \omega-a)^2$, having $\{\theta_i\}$'s which are well separated, that is, $\forall i,j\in [n], |\theta_i^\top \theta_j|\leq \|\theta_i\|^2$, is necessary for full surplus extraction.
\end{proposition}
\begin{proof}{Proof of \Cref{lem:onesidefullsurplus}}
Suppose there exists a menu that performs full surplus extraction, earning a revenue of $\sum_{i=1}^n f_i \|\theta_i\|^2$. Then, by \Cref{lem:sdpup}, there exists a feasible solution, say $\{(v_i,t^i)\}$, for the NC-QCQP from \Cref{eq:qcqp} with an objective value $\sum_{i=1}^n f_it^i$, which is at least $\sum_{i=1} f_i \|\theta_i\|^2$. 

Since $\|v_i\|^2\leq 1$, we get 
$ \|\theta_i\|^2 \geq (\theta_i^\top v_i)^2 \geq t^i$. Thus, $t^i=\|\theta_i\|^2$ holds for all $i$ in $[n]$. Coupled with $\|v_i\|^2\leq 1$, this in turn means that $(\theta_i^\top v_i)^2 -t^i \geq 0$ can only hold if $v_i = \pm \theta_i / \|\theta_i\|$. 

Generically, we pick $v_i = \theta_i/\|\theta_i\|$, because the NC-QCQP is in fact invariant to the sign of $v_i$. To conclude the claim, we notice that the NC-QCQP enforces 
$$ (\theta_i^\top v_i)^2 - t^i \geq (\theta_i^\top v_j)^2 - t^j.$$
We have already shown that $t^i=\|\theta_i\|^2$, hence the left side of the above inequality is zero. Substituting the values of $t^j$ and $v_j$ we have derived above, we get for all $i,j$ in $[n]$ that
$$ 0 \geq \frac{(\theta_i^\top \theta_j)^2}{\|\theta_j\|^2} -\|\theta_j\|^2, $$
which concludes our proof.
\end{proof}

\end{document}